\setlist{noitemsep,leftmargin=\parindent,topsep=2pt}
\setlist{noitemsep,topsep=2pt}
\newcommand{\kibitz}[2]{\ifnum\Comments=1{\color{#1}{#2}}\fi}
\newcommand{\E}{\mathbb{E}}
\DeclareMathOperator*{\argmin}{arg\,min}
\newcommand{\eps}{\varepsilon}
\newcommand{\M}{\mathcal{M}}
\newcommand{\R}{\mathbb{R}}
\newcommand{\Rev}{\mathtt{Rev}}
\renewcommand{\O}{\mathcal{O}}
\newcommand{\T}{\mathcal{T}}
\newcommand{\F}{\mathcal{F}}
\newcommand{\PP}{\mathbb{P}}
\newcommand{\C}{\mathcal{C}}
\theoremstyle{plain}
\newtheorem{theorem}{Theorem}
\newtheorem{lemma}{Lemma}
\newtheorem{definition}{Definition}
\newtheorem{assumption}{Assumption}
\newtheorem{main_theorem}{Main Result}
\newtheorem{example}{Example}
\newtheorem{claim}{Claim}
\newcommand{\newnew}[1]{\kibitz{black}{#1}}
\definecolor{ao(english)}{rgb}{0.0, 0.5, 0.0}
\newcommand{\kibitzAdd}[2]{\ifnum\CommentsAdd=1{\color{#1}{#2}}\fi}
\newcommand{\dcpadd}[1]{\kibitzAdd{black}{#1}}
\begin{document}

\title{Welfare-Preserving $\varepsilon$-BIC to BIC Transformation with\\ Negligible Revenue Loss\thanks{Part of the work was done when Zhe Feng was a PhD student at Harvard University, where he was supported by a Google PhD fellowship. The first version of this paper was posted on arXiv on July 19, 2020.}
}

\author[a]{Vincent Conitzer}
\author[b]{Zhe Feng}
\author[c]{David C. Parkes}
\author[d]{Eric Sodomka}

\affil[a]{Duke University \authorcr \texttt{conitzer@cs.duke.edu}}
\affil[b]{Google Research \authorcr \texttt{zhef@google.com}}
\affil[c]{Harvard University \authorcr \texttt{parkes@eecs.harvard.edu}}
\affil[d]{Facebook Research \authorcr \texttt{sodomka@facebook.com}}

\date{August 3, 2021}
\maketitle

\begin{abstract}
In this paper, we provide a transform from an $\varepsilon$-BIC
mechanism into an exactly BIC mechanism without any loss of social
welfare and with additive and negligible revenue loss.  This is the
first $\varepsilon$-BIC to BIC transformation that preserves welfare
and provides negligible revenue loss.  The revenue loss bound is
tight given the requirement to maintain social welfare.  Previous
$\varepsilon$-BIC to BIC transformations preserve social welfare but
have no revenue guarantee~\citep{BeiHuang11}, or suffer welfare loss
while incurring a revenue loss with both a multiplicative and an
additive term, e.g.,~\citet{DasWeinberg12, Rubinstein18,
Cai19}. The revenue loss achieved by our transformation is
incomparable to these earlier approaches and can be significantly
less. \newnew{Our approach is different from the previous replica-surrogate matching methods and we directly make use of a directed and weighted type graph (induced by the types' regret), one for each agent. The transformation runs a \emph{fractional rotation step} and a \emph{payment reducing step} iteratively to make the mechanism Bayesian incentive compatible.} We also analyze $\varepsilon$-expected ex-post IC
($\varepsilon$-EEIC) mechanisms~\citep{DuettingFJLLP12}. We provide
a welfare-preserving transformation in this setting with the same
revenue loss guarantee for uniform type distributions and give an
impossibility result for non-uniform distributions.  We apply the
transform to linear-programming based and machine-learning based
methods of automated mechanism design.
\end{abstract}

\thispagestyle{empty}
\clearpage

\section{Introduction}

Optimal mechanism design is very challenging in multi-dimensional settings such as those for selling multiple items, \dcpadd{such as those that arise in the sale of wireless spectrum licenses or the allocation of advertisements to slots in internet advertising. Recognizing this challenge, } there is considerable interest in adopting algorithmic approaches to address these problems of economic design. These include polynomial-time black-box reductions from multi-dimensional revenue maximization to the algorithmic problem for virtual welfare optimization,~e.g.,\citep{CaiDW12a, CaiDW12b, CaiDW13}, and the application of methods from linear programming~\citep{ConitzerS02,ConitzerS04a} and machine learning~\citep{DuettingFJLLP12, FengEtAl18, Duetting19} to automated mechanism design.

 \if 0

How should one sell a set of goods, given conflicting desiderata of maximizing revenue and welfare, and considering the strategic behavior of potential buyers?
Classic results in mechanism design provide answers to some extreme points of the above question. If the seller wishes to maximize revenue and is selling a single good, then theory prescribes Myerson's optimal auction. If the seller wishes to maximize social welfare (and is selling any number of goods), then theory prescribes the Vickrey-Clarke-Groves (VCG) mechanism.

But in practical applications, one often cares about both revenue and welfare. 
\fi

\dcpadd{ Moreover, it is common in practical settings that it is important consider both social welfare (efficiency) and revenue.  For example, national governments that use auctions to sell wireless spectrum licenses care both about the efficiency of the allocation as this promotes valuable use as well as the revenue that flows from auctions into the budget.}
In regard to online advertising, there are various works that explore this trade-off between welfare and revenue. Display advertising has focused on yield optimization (i.e., maximizing a combination of revenue and the quality of ads shown)~\citep{Balseiro17}, and work in sponsored search auctions has considered a squashing parameter that trades off efficiency and revenue~\citep{LahaiePe07}. At the same time, there is a surprisingly small theoretical literature that considers both welfare and revenue properties together (e.g., \citet{DPPS12}).

\dcpadd{At the same time, the use of computational methods for economic design often comes with} a limitation, which is that the output mechanism may only be approximately incentive compatible (IC); e.g., the black-box reductions are approximately IC when the algorithmic problems are solved in polynomial time, the LP approach works on a discretized space to reduce computational cost but thereby achieves a mechanism that is only approximately IC in the full space, and the machine learning approaches train a mechanism over finite training data and achieve approximate IC on the full type distribution.
 While  it has been debated as to whether approximate incentive compatibility may suffice, e.g.,~\citep{Carroll12,LubinParkes12, AzeyedoBudish19}, %
 this does add an additional layer of unpredictability to the performance of a designed
 mechanism. First, the fact that an agent can gain only a small amount from deviating does not preclude strategic behavior---perhaps the agent can easily  identify a useful deviation, for example through repeated interactions, that reliably provides increased profit.
 This can be a problem when strategic responses lead to an unraveling of the desired economic properties of the mechanism (we provide such an example in this paper). 
 The possibility of strategic reports by participants has additional consequences as well, for example making it more challenging for a designer to  confidently measure ex-post welfare after outcomes are realized.

 For the above reasons, there is considerable interest in methods to
 transform an $\eps$-Bayesian incentive compatible ($\eps$-BIC) mechanism to an exactly BIC mechanism~\citep{DasWeinberg12, Rubinstein18, Cai19}, or  an $\eps$-expected ex-post IC ($\eps$-EEIC) mechanisms~\citep{DuettingFJLLP12,Duetting19} into
an exactly BIC mechanism.  The main question we want to answer in this paper is:
\begin{quote}
\emph{Given an $\eps$-BIC/$\eps$-EEIC mechanism, is there an exact BIC mechanism that maintains social welfare and achieves negligible revenue loss compared with the original mechanism \dcpadd{under truthful reports}? If so, can we find the transformed, BIC mechanism efficiently?}
\end{quote}

\dcpadd{In this paper, we provide the first $\eps$-BIC to BIC transform that is welfare-preserving while also ensuring only negligible revenue loss relative to the baseline mechanism. This simultaneous attention to the properties of both welfare and revenue is of practical importance. An immediate corollary of our main result is the well known result from economic theory, namely that efficient allocations can be implemented in an incentive-compatible way. For example, the transform can be applied to a first-price, sealed-bid auction to achieve an efficient and BIC auction.

Our approach is different from the previous replica-surrogate matching methods and we directly make use of a directed and weighted type graph (induced by the types' regret), one for each agent. The transformation runs a \emph{fractional rotation step} and a \emph{payment reducing step} iteratively to make the mechanism Bayesian incentive compatible.

The transform also satisfies
another appealing property, which is that of {\em allocation-invariance}. The transformed mechanism maintains the same distribution on outcomes, allocations for example, as the baseline mechanism (focusing here on the non-monetary part of the output of the mechanism).\footnote{This allocation-invariance is \emph{ex ante}, i.e., it is with respect to the prior distribution over types.} This property is useful in many scenarios. Consider, for example, a principal such as Amazon that is running a market and also incurs a resource cost for different outcomes (e.g. warehouse storage cost). With this allocation-invariance property, then not only is the welfare the same (or better) and the revenue loss negligible, but the resource cost (averaged over iterations of the mechanism) of the principal is preserved by the transform. }

\subsection{Model and Notation}

We consider a general mechanism design setting with a set of $n$ agents $N = \{1, \ldots, n\}$. Each agent $i$ has a private type $t_i$. We denote the entire type profile as $t = (t_1, \ldots, t_n)$, which is drawn from a joint distribution $\F$. Let $\F_i$ be the marginal distribution of agent $i$ and $\T_i$ be the support of $\F_i$. Let $t_{-i}$ be the joint type profile of the other agents, $\F_{-i}$ be the associated marginal type distribution. Let $\T = \T_1\times\cdots\times\T_n$ and $\T_{-i}$ be the support of $\F$ and $\F_{-i}$, respectively.  In this setting, there is a set of feasible {\em outcomes} denoted by $\O$, typically an allocation of items to agents. Later in the paper, we sometimes also use ``outcome'' to refer to the output of the mechanism, namely the allocation together with the  payments, when this is clear from the context.

We focus on the discrete type setting, i.e., $\T_i$ is a finite set containing $m_i$ possible types, i.e., $|\T_i| = m_i$. Let $t_{i}^{(j)}$ denote the $j$th possible type of agent $i$, where $j\in [m_i]$. For all $i$ and $t_i$, $v_i: (t_i, o) \rightarrow \R_{\geq 0}$ is a valuation that maps a type $t_i$ and outcome $o$ to a non-negative real number. A \emph{direct revelation} mechanism $\M = (x, p)$ is a pair of {\em allocation rule} $x_i: \T \rightarrow \Delta(\O)$, possibly randomized, and {\em expected payment rule} $p_i: \T \rightarrow \R_{\geq 0}$.
We slightly abuse notation, and also use $v_i$ to define the expected value of bidder $i$ for mechanism $\M$, with the expectation taken with respect to the randomization used by the mechanism, that is 
\begin{eqnarray} \forall i, \hat{t}\in \T, v_i(t_i, x(\hat{t})) = \E_{o\sim x(\hat{t})} [v_i(t_i, o)],  
\end{eqnarray}
for true type $t_i$ and reported type profile $\hat{t}$. %
When the reported types are  $\hat{t}=(\hat{t}_1,\ldots,\hat{t}_n)$, the output of mechanism $\M$ for agent $i$ is denoted as $\M_i(\hat{t}) = (x_i(\hat{t}), p_i(\hat{t}))$. We define the {\em utility} of agent $i$ with true type $t_i$ and a reported type $\hat{t}_i$ given the reported type profile $\hat{t}_{-i}$
of other agents as a quasilinear function, 
\begin{eqnarray}
u_i(t_i, \M(\hat{t})) = v_i(t_i, x(\hat{t})) - p_i(\hat{t}).
\end{eqnarray} 

For a multi-agent setting, it will be useful to also define the  {\em interim rules}.
\begin{definition}[Interim Rules of a Mechanism]\label{def:interim-rule}
For a mechanism $\M$ with allocation rule $x$ and payment rule $p$, the interim allocation rule $X$ and payment rule $P$ are defined as, $\forall i, t_i\in\T_i, X_i(t_i) = \E_{t_{-i}\in \F_{-i}}[x_i(t_i; t_{-i})], P_i(t_i) = \E_{t_{-i}\in \F_{-i}}[p_i(t_i; t_{-i})]$.
\end{definition}
In this paper, we assume we have oracle access to the interim quantities of mechanism $\M$.
\begin{assumption}[Oracle Access to Interim Quantities]\label{asmp:interim-oracle}
For any mechanism $\M$, given any type profile $t=(t_1, \ldots, t_n)$, we receive the interim allocation rule $X_i(t_i)$ and payments $P_i(t_i)$, for all $i, t_i$.
\end{assumption}

We define the menu of a mechanism $\M$ in the following way.
\begin{definition}[Menu]\label{def:menu}
For a mechanism $\M$, the menu of bidder $i$ is the set $\{\M_i(t)\}_{t\in \T}$. The menu size of agent $i$ is denoted as $|\M_i|$.
\end{definition}

In mechanism design, there is a focus on designing {\em incentive compatible} mechanisms, so that truthful reporting of types is an equilbrium. This is without loss of generality by the
revelation principle.

It has also been useful to work with approximate-IC mechanisms, and
these have been studied in various papers, e.g.~\citep{DasWeinberg12, CaiZhao17, Rubinstein18, Cai19, DuettingFJLLP12, Duetting19, FengEtAl18, BSV19, LahaieMeSi18, Fengss19}.

In this paper, we focus on two definitions of approximate incentive compatibility, $\eps$-BIC and $\eps$-expected ex post incentive compatible ($\eps$-EEIC). %
\begin{definition}[$\eps$-BIC Mechanism]\label{def:eps-BIC} A mechanism $\M$ is $\eps$-BIC iff for all $i, t_i$,
\begin{eqnarray*}
\E_{t_{-i}\sim \F_{-i}}[u_i(t_i, \M(t))] \geq \max_{\hat{t}_i \in \T_i}\E_{t_{-i}\sim \F_{-i}}[u_i(t_i, \M(\hat{t}_i; t_{-i}))] - \eps
\end{eqnarray*}
\end{definition}

\begin{definition}[$\varepsilon$-expected ex post IC ($\varepsilon$-EEIC) Mechanism~\citep{DuettingFJLLP12}]\label{def:eps-EEIC}
A mechanism $\M$ is $\eps$-EEIC if and only if for all $i$, $\E_t\left[\max_{\hat{t}_i \in \T_i} u_i(t_i, \M(\hat{t}_i; t_{-i})) - u_i(t_i, \M(t))\right]\leq \epsilon$.
\end{definition}

A mechanism $\M$ is $\eps$-EEIC iff no agent can gain more than $\eps$ ex post regret, in expectation over all type profiles $t\in \T$ (where ex post regret is the amount by which an agent's utility can be improved by misreporting to some $\hat{t}_i$ given knowledge of $t$,
instead of reporting its true type $t_i$).
A 0-EEIC mechanism is essentially DSIC.\footnote{For discrete type settings, 0-EEIC is exactly DSIC. For the continuous type case, a 0-EEIC mechanism is DSIC up to zero measure events.}

We can also consider an interim version of $\eps$-EEIC, termed as $\eps$-expected interim IC ($\eps$-EIIC),  which is  defined as
\begin{eqnarray*}
\E_{t_i\sim \F_i}\left[\E_{t_{-i}\sim \F_{-i}} \left[u_i(t_i, \M(t_i; t_{-i}))\right]\right] \geq \E_{t_i\sim \F_i}\left[\max_{t'_i\in\T_i} \E_{t_{-i}\sim \F_{-i}} \left[u_i(t_i, \M(t'_i; t_{-i}))\right]\right] - \eps
\end{eqnarray*}
\noindent All our results for $\eps$-EEIC to BIC transformation hold for $\eps$-EIIC mechanism. Indeed, we prove that any $\eps$-EEIC mechanism is also $\eps$-EIIC in Lemma~\ref{lem:EEIC-decomposition} in the  Appendix.

Another important property of mechanism design is {\em individual rationality} (IR), and we define two standard versions of IR (ex-post/interim IR) in Appendix~\ref{app:omitted-definitions}.
The transformation that we provide from $\eps$-BIC/$\eps$-EEIC to BIC preserves individual rationality: if the original mechanism is interim IR then the mechanism achieved after transformation is interim IR, and if the the original mechanism is ex-post IR then the mechanism achieved after transformation is ex-post IR.

For a mechanism $\M$, let $R^\M(\F)$ and $W^\M(\F)$ represent the expected revenue and social welfare, respectively, when agent types are sampled from $\F$ and they play $\M$ \emph{truthfully} \footnote{In this paper, we consider the revenue and welfare performance of the untruthful mechanisms with truthful reports, which is commonly used in the literature. It is an interesting future direction to consider the performance of untruthful mechanisms under equilibrium reporting.}.  This definition applies equally to an IC or non-IC mechanism.

\begin{definition}[Expected Social Welfare and Revenue]\label{def:rev-social-welfare}
For a mechanism $\M = (x, p)$ with agents' types drawn from distribution $\F$, the expected revenue for truthful reports  is $R^\M(\F) = \E_{t\sim \F}[\sum_{i=1}^n p_i(t)]$, and the expected social welfare for truthful reports is $W^\M(\F) = \E_{t\sim \F}[\sum_{i=1}^n v_i(t_i, x(t))]$. 
\end{definition}

We focus on welfare-preserving transforms that provide negligible revenue loss.
\begin{definition}[Welfare-preserving Transformation with Negligible Revenue Loss] Given an $\eps$-BIC/\newnew{$\eps$-EEIC} mechanism $\M$ over type distribution $\F$, a welfare-preserving transform that provides negligible revenue loss outputs a mechanism $\M'$ such that, $W^{\M'}(\F) \geq W^{\M}(\F)$ and $R^{\M'}(\F) \geq R^{\M}(\F) - r(\eps)$, where $r(\eps) \rightarrow 0$ as $\eps\rightarrow 0$.
\end{definition}

\subsection{Previous $\eps$-BIC to BIC transformations}

There are existing algorithms for transforming any $\eps$-BIC mechanism to an exactly BIC mechanism with only negligible revenue loss~\citep{DasWeinberg12, Rubinstein18, Cai19}. The central tools and reductions in these papers build upon the method of  \emph{replica-surrogate matching}~\citep{Hartline10, Hartline11, BeiHuang11}.
Here we briefly introduce replica-surrogate matching and its application to an $\eps$-BIC to BIC transformation.
\smallskip

\noindent{\bf Replica-surrogate matching. } For each agent $i$, construct a bipartite graph $G_i = (\mathcal{R}_i\cup \mathcal{S}_i, E)$. The nodes in $\mathcal{R}_i$ are called {\em replicas}, which are types sampled i.i.d.~from the type distribution of agent $i$, $\F_i$. The nodes in $\mathcal{S}_i$ are called {\em surrogates}, and also sampled from $\F_i$. In particular, the true type $t_i$ is added in $\mathcal{R}_i$. %
There is an edge between each replica and each surrogate. The weight of the edge between a replica $r^{(j)}_i$ and a surrogate $s^{(k)}_i$ is induced by the mechanism, and defined as 
\begin{eqnarray}\label{eq:replica-surrogate-weight}
w_i(r_i^{(j)}, s^{(k)}) %
= E_{t_{-i}\in \F_{-i}}\left[v_i(r_i^{(j)}, x(s_i^{(k)}, t_{-i}))\right] - (1-\eta)\cdot \E_{t_{-i}\in \F_{-i}}\left[p_i(s_i^{(k)}, t_{-i})\right].
\end{eqnarray}

The replica-surrogate matching computes the maximum weight matching in $G_i$. %
\smallskip

\noindent{\bf $\eps$-BIC to BIC transformation by Replica-Surrogate Matching~\citep{DasWeinberg12}.}
We briefly describe this transformation,  deferring the details to  Appendix~\ref{app:replica-surrogate-mechanism}. Given a mechanism $\M=(x, p)$, this transformation constructs a bipartite graph between replicas (include the true type $t_i$) and surrogates, as described above.
The approach then runs VCG matching to compute the maximum weighted matching for this bipartite graph, and charges each agent its VCG payment.
For unmatched replicas in the VCG matching, the method randomly matches a surrogate. %
Let $\M'= (x, (1-\eta)p)$ be the modified mechanism.
If the true type $t_i$ is matched to a surrogate $s_i$, then  agent $i$ uses $s_i$ to compete in $\M'$. The outcome of $\M'$ is $x(s)$, given matched surrogate profile $s$, and the payment of agent $i$ (matched in VCG matching) is $(1-\eta)p_i(s)$ plus the VCG payment from the VCG matching, where $\eta$ is the parameter in replica-surrogate matching . If $t_i$ is not matched in the VCG matching, the agent gets nothing and pays zero.

This replica-surrogate matching transform does not preserve welfare.
\newnew{Indeed, the replica-surrogate matching transformation must suffer welfare loss in some cases.\footnote{The previous $\eps$-BIC to BIC transformations~\citep{DasWeinberg12, Rubinstein18, Cai19} don't state the welfare loss guarantee clearly. Consider Example~\ref{example:better-result} shown in Section~\ref{sec:contribution}, the original $\eps$-BIC mechanism already maximizes welfare and the optimal allocation is unique, any unmatched type in replica-surrogate matching creates a welfare loss. Particularly, the welfare loss is \emph{unbounded} when (inappropriately) choosing $\eta < \frac{\eps}{\sqrt{m} - 1}$ in replica-surrogate matching.}
}
Turning to revenue, the revenue loss of the replica-surrogate matching mechanism relative to the orginal mechanism $\M$ is guaranteed to be at most %
$\eta\Rev(\M) + \O\left(\frac{n\eps}{\eta}\right)$~\citep{DasWeinberg12, Rubinstein18}, and has both a multiplicative and an additive term.
\citet{Cai19} proposes a polynomial time algorithm for performing this transform
with only sample access to the type distribution and query access to the original $\eps$-BIC mechanism. \newnew{The transform extends replica-surrogate matching and \emph{Bernoulli factory} techniques proposed by~\cite{Dughmi17} to handle negative weights in the bipartite graph
and provides the same revenue property as the previous work~\citep{DasWeinberg12, Rubinstein18}, without preserving social welfare.\footnote{\citet{Dughmi17} propose a general transformation from any black-box algorithm $\mathcal{A}$ to a BIC mechanism that only incurs negligible loss of welfare, with only polynomial number queries to $\mathcal{A}$, by using Bernoulli factory techniques. This approach has no guarantee on the revenue loss. \citet{Cai19} generalize Bernoulli factory techniques in the replica-surrogate matching to transform any $\eps$-BIC mechanism to a BIC mechanism that only incurs negligible loss of revenue, with polynomial number queries to the original $\eps$-BIC mechanism and polynomial number samples from the type distribution.
}} %
In this work, we assume oracle access to the interim quantities of the original $\eps$-BIC mechanism, following 
the model of \citet{Hartline10, Hartline11, BeiHuang11, DasWeinberg12, Rubinstein18}. How to generalize the proposed transform to the setting that only has sample access to the type distribution and runs in polynomial time will be an interesting future work.

The black-box reduction of~\citet{BeiHuang11} focuses on preserving welfare only. Indeed, it can be regarded as a special case of this replica-surrogate matching method, where the weight of the bipartite graph only depends on the valuations and not the prices ($\eta = 1$ in Eq.~(\ref{eq:replica-surrogate-weight})), 
and the replicas and surrogates are both $\T_i$ (there is no sampling for replicas and surrogates).
For this reason, the transform described in~\citet{BeiHuang11} can preserve social welfare  but may provide arbitrarily bad revenue (see Example~\ref{example:better-result}). %

\subsection{Our Contributions}\label{sec:contribution}

We first state the main result of the paper, which provides a welfare-preserving transform from approximate BIC to exact BIC with negligible revenue loss. This result holds for the general mechanism design setting with $n\geq 1$ agents and independent private types and is not restricted to allocation problems.
\begin{main_theorem}[Theorem~\ref{thm:multi-agent-BIC}]\label{thm:main-thm-1}
With $n\geq 1$ agents and independent private types, and an $\eps$-BIC and IR mechanism $\M$ that achieves $W$ expected social welfare and $R$ expected revenue given truthful reports, there exists a BIC and IR mechanism $\M'$ that achieves at least $W$ social welfare and $R - \sum_{i=1}^n|\T_i|\eps$ revenue. The transformation is (ex ante) allocation-invariant. Given an oracle access to the interim quantities of $\M$, the running time of the transformation from $\M$ to $\M'$ is at most $\mathtt{poly}(\sum_i|\T_i|
)$. %
\end{main_theorem}

The transformation works directly on the type graph of each agent, and it is this that allows us to maintain social welfare--- indeed, we may even improve social welfare in our transformation. In contrast, the transformation from~\citet{BeiHuang11} can incur unbounded revenue loss (see Example~\ref{example:better-result}, in which it loses all revenue)
and existing approaches~\citep{DasWeinberg12, CaiZhao17, Rubinstein18, Cai19} with negligible revenue loss can lose social welfare (see Example~\ref{example:better-result}).

Choosing $\eta = \sqrt{\eps}$, the revenue loss of existing transforms~\citep{DasWeinberg12, CaiZhao17, Rubinstein18, Cai19} is at most $\sqrt{\eps}\Rev(\M) + O(n\sqrt{\eps})$, 
 with  both a multiplicative and an additive-loss in revenue,
while our revenue loss is additive.
In  the case that the original revenue, $\Rev(\M)$, is order-wise smaller than the number of types, i.e., $\Rev(\M) = o(\sum_i |\T_i|)$, the existing transforms provide a better revenue bound (at some cost of welfare loss). But when the
revenue is relatively larger than the number of types, i.e., $\Rev(\M) = \Omega(\sum_i|\T_i|)$,
our transformation can achieve strictly better revenue than these earlier approaches
while also preserving welfare. %

Before describing our techniques we illustrate these properties through a single agent, two outcome example in Example~\ref{example:better-result}. We show that even for the case that $\Rev(M) = o(\sum_{i} |\T_i|)$, our transformation can
strictly outperform existing transforms w.r.t revenue loss.
\begin{example}\label{example:better-result}
Consider a single agent with $m$ types, $\T=\{t^{(1)}, \cdots, t^{(m)}\}$, where the type distribution is uniform. Suppose there are two outcomes, the agent with type $t^{(j)} (j=1,\ldots, m-1)$ values outcome 1 at 1 and values outcome 2 at 0. The agent with type $t^{(m)}$ values outcome 1 at $1+\eps$  and outcome 2 at $\sqrt{m}$.
The mechanism $\M$ we consider is: if the agent reports type $t^{(j)}, j\in[m-1]$, $\M$ gives outcome 1 to the agent with a price of 1, and if the agent reports type $t^{(m)}$, $\M$ gives outcome 2 to the agent with a price of $\sqrt{m}$. $\M$ is $\eps$-BIC, because the agent with type $t^{(m)}$ has a regret $\eps$. The expected revenue achieved by $\M$ is $1 + \frac{\sqrt{m}-1}{m}$. In addition, $\M$ maximizes social welfare, $1+\frac{\sqrt{m} - 1}{m}$.

Our transformation decreases the payment of type $t^{(m)}$ by $\eps$ for a loss of $\frac{\eps}{m}$ revenue and preserves the social welfare. 

The transformation by~\citet{BeiHuang11} preserves the social welfare, however, the VCG payment (envy-free prices) is $0$ for each type. Therefore,~\citet{BeiHuang11}'s approach loses all revenue.

Moreover, the approaches that make use of replica-surrogate matching~\citep[e.g.]{DasWeinberg12, Rubinstein18, Cai19} lose at least $\frac{\eps}{m} + \frac{\eps}{\sqrt{m} - 1}$ revenue, which is about $(\sqrt{m}+1)$ times larger than the revenue loss of our transformation. %
We argue this claim by a case analysis,
\begin{itemize}
\item If $\eta \geq \frac{\eps}{\sqrt{m}-1}$, the VCG matching is the identical matching and the VCG payment is $0$ for each type. In total, the agent loses at least $\eta\cdot \frac{\sqrt{m} + m -1}{m} \geq \frac{\eps}{m} + \frac{\eps}{\sqrt{m} - 1}$ expected revenue.

\item If $\eta < \frac{\eps}{\sqrt{m}-1}$, the agent with type $t^{(m)}$ will be assigned outcome 1 ($t^{(m)}$ is matched to some $t^{(j)}, j\in[m-1]$, in VCG matching) and the VCG payment is $\eta$. Thus, type $t^{(m)}$ loses at least $\sqrt{m} - (1-\eta) -\eta = \sqrt{m}-1$ revenue. For any type $t^{(j)}, j\in[m-1]$, if $t^{(j)}$ is matched in VCG matching, the VCG payment is $0$, since it will be matched to another type $t^{(k)}, k\in[m-1]$. Each type $t^{(j)}, j\in[m-1]$ loses at least $\eta$ revenue. Overall the agent loses at least $\frac{\sqrt{m}-1}{m}$ expected revenue. In addition, since the type $t^{(m)}$ is assigned outcome 1, we lose at least $\frac{\sqrt{m}-1-\eps}{m}$ expected social welfare.
\end{itemize}

Moreover, there is a chance that a type is not matched, in which case the social welfare is reduced.
\end{example}

Our transformation satisfies also satisfies an appealing allocation-invariance property (see Definition~\ref{def:allocation-invariant}). Given an $\eps$-BIC mechanism $\M = (x, p)$, the transform outputs a BIC mechanism $\M'= (x', p')$ that satisfies $\sum_{t\in \T} f(t)x'(t) = \sum_{t\in \T} f(t)x(t)$. As noted above, this property would be of interest, for example, to  a principal who is operating the logistics for provisioning goods sold through the mechanism. Because of allocation-invariance, the principal knows that the distribution on goods sold is unchanged as a result of the transform and thus logistical aspects in regard to inventory storage are unchanged. %
The previous transformations~\citep{BeiHuang11, DasWeinberg12, Rubinstein18, Cai19} don't satisfy this allocation-invariance property.

We also support $\eps$-expected ex-post IC ($\eps$-EEIC), which is motivated by work on the use of  machine learning to achieve approximate IC mechanisms in multi-dimensional settings~\citep{DuettingFJLLP12,FengEtAl18, Duetting19}.
In comparison with $\eps$-BIC, the $\eps$-EEIC metric only guarantees at most $\eps$ ex-post gain in expectation over type profiles, with no interim guarantee for any particular type. It is incomparable in strength with $\epsilon$-BIC because $\eps$-EEIC also strengthens $\eps$-BIC in working with ex-post regret rather than interim regret. %
Our second main result shows how to transform an $\eps$-EEIC mechanism to a BIC mechanism. \dcpadd{For this, we need the additional assumption of a uniform type distribution and prove that this is necssary to achieve a transform with suitable properties.}
\begin{main_theorem}[Informal Theorem~\ref{thm:impossibility-EEIC} and Theorem~\ref{thm:multi-agent-BIC}]
For $n \geq 1$ agents with independent uniform type distribution, our $\eps$-BIC to BIC transformation can be applied to an $\eps$-EEIC mechanism and all results in \dcpadd{Main Result 1} hold here.
For a non-uniform type distribution, we show an impossibility result for an $\eps$-EEIC to BIC, welfare-preserving transformation with only negligible revenue loss, even for the single agent case.
\end{main_theorem}

Moreover, we also argue that our revenue loss bounds are tight given the requirement to maintain social welfare. This holds for both $\eps$-BIC mechanisms and $\eps$-EEIC mechanisms.
\begin{main_theorem}[Informal Theorem~\ref{thm:lower-bound-revenue-loss} and Theorem~\ref{thm:lb-revenue-loss-multiple-agents}]
There exists an $\eps$-BIC/$\eps$-EEIC and IR mechanism for $n\geq 1$ agents with independent uniform type distribution, for which any welfare-preserving transformation must suffer $\Omega(\sum_i |\T_i| \eps)$ revenue loss.
\end{main_theorem}

We also apply the transform to automated mechanism design in Section~\ref{sec:application},  considering both a linear-programming and machine learning framework and looking to maximize a linear combination of expected revenue and social welfare, i.e., $\mu_\lambda(\M, \F) = (1-\lambda)R^\M(\F) + \lambda W^{\M}(\F)$,
for some $\lambda\in[0, 1]$ and type distribution $\F$. %
We summarize the result of this application.
\begin{main_theorem}[Informal Theorem~\ref{thm:application-LP} and Theorem~\ref{thm:application-regretnet}]
For $n$ agents with independent type distribution $\times_{i=1}^n \F_i$ on $\T=\T_1 \times\cdots\times \T_n$ and an $\alpha$-approximation LP algorithm $\mathtt{ALG}$ to output an $\eps$-BIC ($\eps$-EEIC) and IR mechanism $\M$ on $\F$ with $\mu_\lambda(\M, \F) \geq \alpha \mathtt{OPT}$, there exists a BIC and IR mechanism $\M'$, s.t., $\mu_\lambda(\M', \F) \geq \alpha \mathtt{OPT} - (1-\lambda)\sum_{i=1}^n \vert \T_i\vert \eps$. Given oracle access to the interim quantities of $\M$, the running time to output the mechanism $\M'$ is at most $\mathtt{poly}(\sum_{i=1} |\T_i|, rt_{\mathtt{ALG}}(x))$, where $rt_{\mathtt{ALG}}(\cdot)$ is the running time of $\mathtt{ALG}$ and $x$ is the bit complexity of the input. Similar results hold for a machine-learning based approach, in a PAC learning manner.
\end{main_theorem}

Compared with the previous transformations that are able to achieve negligible revenue loss~\citep{DasWeinberg12,Rubinstein18, Cai19}, our transformation achieves a better blended objective of welfare and revenue when $\lambda$ is close to $1$ since we preserve welfare of the original mechanism after transformation.

\subsection{Our Techniques}\label{sec:our-technique}

Instead of constructing a bipartite replica-surrogate graph, our transformation makes use of a directed, weighted type graph, one for  each agent. %
For simplicity of exposition, we can consider a single agent with a uniform type distribution.

Given an $\eps$-BIC mechanism, $\M$, we construct a graph $G=(\T, E)$, where each node represents a possible type of the agent and there is an edge from node $t^{(j)}$ to $t^{(k)}$ if the output of the mechanism for type $t^{(k)}$ %
is weakly preferred by the agent for true type $t^{(j)}$ in $\M$, i.e. $u(t^{(j)}, \M(t^{(k)})) \geq u(t^{(j)}, \M(t^{(j)}))$.
The weight $w_{jk}$ of edge $(t^{(j)}, t^{(k)})$ is defined as the {\em regret} of  type $t^{(j)}$ by not misreporting $t^{(k)}$, i.e.,
\begin{eqnarray}\label{eq:weight-type-graph}
w_{jk} = u(t^{(j)}, \M(t^{(k)})) - u(t^{(j)}, \M^\eps(t^{(j)})).
\end{eqnarray}

The transformation method then iterates over the following two steps, constructing a transformed mechanism from the original mechanism. We briefly introduce the two steps here and defer to Figure~\ref{transform:single-agent-uniform} for detailed description.
\smallskip

{\bf Step 1.} If there is a cycle $\C$ in the type graph with at least one positive-weight edge, then all types in this cycle weakly prefer their descendant in the cycle and one or more strictly prefers their descendant.  In this case, we ``\emph{rotate}" the outcome and payment of types against the direction of the cycle, to let each type receive a weakly better outcome compared with its current outcome. We repeat Step 1 until all cycles in the type graph are removed.

{\bf Step 2.}
We pick a source node, if any, with a positive-weight outgoing edge (and thus regret for truthful reporting). We decrease the payment made by this source node, as well as decreasing the payment made by each one of its ancestors (note the lack of cycles at this point) by the same amount, until we create a new edge in the type graph with weight zero, such that the modification to payments is about to increase regret for some type. If at any point we create a cycle, we move to Step 1. Otherwise, we repeat Step 2 until there are no source nodes with positive-weight, outgoing edges. %
\medskip

The algorithm works on the type graph induced by the original, approximately IC mechanism, $\M$, and  directly modifies the mechanism for each type, to make the mechanism IC. This allows the transformation to preserve welfare and provides negligible revenue loss. Step 2 has no effect on welfare, since it only changes (interim) payment for each type. Step 1 is designed to remove cycles created in Step 2 so that we can run Step 2, while preserving welfare simultaneously. Both steps reduce the total weight of the type graph, which is equivalent to reducing the regret in the mechanism to make it IC.
We illustrate the transform in Fig.~\ref{fig:single-agent-uniform}. %
\begin{figure*}[h]
\centering
\begin{tikzpicture}[scale=1.3, line cap=round,line join=round,>=triangle 45,x=1.0cm,y=1.0cm]
\clip(-6.,6.5) rectangle (3.,14.);
\draw (-4.6,7.49913859274312) node[anchor=north west] {$t^{(2)}$};
\draw (-3.56709372991065,7.943337836726545) node[anchor=north west] {$t^{(3)}$};
\draw (-5.8,7.975643236288975) node[anchor=north west] {$t^{(1)}$};
\draw (-5.8,9.4) node[anchor=north west] {$t^{(l)}$};
\draw [->,line width=0.5pt] (-2.5,12.) -- (-1.5,11.5);
\draw [->,line width=0.5pt] (-0.5,12.) -- (-0.5,13.);
\draw [->,line width=0.5pt] (-1.5,13.5) -- (-2.5,13.);
\draw [->,line width=0.5pt,dash pattern=on 1pt off 1pt] (-1.5,11.5) -- (-0.5,12.);
\draw [->,line width=0.5pt,dash pattern=on 1pt off 1pt] (-0.5,13.) -- (-1.5,13.5);
\draw [->,line width=0.5pt,dash pattern=on 1pt off 1pt] (-2.5,13.) -- (-2.5,12.);
\draw [rotate around={-2.1210963966614047:(-3.7627370618230502,12.164608276836091)},line width=0.5pt,dash pattern=on 1pt off 1pt] (-3.7627370618230502,12.164608276836091) ellipse (0.8905164963388845cm and 0.7477949646867164cm);
\draw [->,line width=0.5pt] (-3.429851853529851,12.411587624924588) -- (-2.5,13.);
\draw [->,line width=0.5pt,dash pattern=on 1pt off 1pt] (-3.565869465520619,11.874675998645243) -- (-2.5,12.);
\draw [->,line width=0.5pt,dash pattern=on 1pt off 1pt] (-3.429851853529851,12.411587624924588) -- (-2.5,12.);
\draw [->,line width=0.5pt,dash pattern=on 1pt off 1pt] (-4.,12.5) -- (-2.5,13.);
\draw [->,line width=0.5pt] (-4.188686952004661,11.946264215482488) -- (-3.565869465520619,11.874675998645243);
\draw [rotate around={-2.121096396661412:(-2.1919251120115626,10.608864313856833)},line width=0.5pt,dash pattern=on 1pt off 1pt] (-2.1919251120115626,10.608864313856833) ellipse (0.8905164963388846cm and 0.7477949646867171cm);
\draw [->,line width=0.5pt] (-2.5,12.) -- (-2.134105128775695,10.49302341368639);
\draw [->,line width=0.5pt] (-2.5,12.) -- (-2.678175576738766,10.815170389453996);
\draw [->,line width=0.5pt] (-2.5,12.) -- (-1.7546875795382901,10.965505644812213);
\draw (-2.5,12.5) node[anchor=north west] {$t^{(1)}$};
\draw (-1.8,12) node[anchor=north west] {$t^{(2)}$};
\draw [->,line width=0.5pt] (-5.488861080180847,7.986085603434901) -- (-4.488861080180845,7.486085603434901);
\draw [->,line width=0.5pt,dash pattern=on 1pt off 1pt] (-4.488861080180845,7.486085603434901) -- (-3.488861080180844,7.986085603434901);
\draw [->,line width=0.5pt] (-3.488861080180844,7.986085603434901) -- (-3.488861080180844,8.986085603434901);
\draw [->,line width=0.5pt,dash pattern=on 1pt off 1pt] (-3.488861080180844,8.986085603434901) -- (-4.488861080180845,9.486085603434901);
\draw [->,line width=0.5pt] (-4.488861080180845,9.486085603434901) -- (-5.488861080180847,8.986085603434901);
\draw [->,line width=0.5pt,dash pattern=on 1pt off 1pt] (-5.488861080180847,8.986085603434901) -- (-5.488861080180847,7.986085603434901);
\draw [->,line width=0.5pt] (-3.9,9.5) -- (-3.4,10.0);
\draw [->,line width=0.5pt] (-3.2,10.0) -- (-3.7,9.5);
\draw (-3.4,9.8) node[anchor=north west] {Step 1};
\draw [rotate around={-2.1210963966614105:(0.13865557869089434,9.166587568939116)},line width=0.5pt,dash pattern=on 1pt off 1pt] (0.13865557869089434,9.166587568939116) ellipse (0.8905164963388553cm and 0.7477949646866925cm);
\draw [->,line width=0.5pt,dash pattern=on 1pt off 1pt] (0.3299045320249989,8.86746623143583) -- (1.395773997545618,8.992790232790588);
\draw [->,line width=0.5pt,dash pattern=on 1pt off 1pt] (0.46592214401576715,9.404377857715176) -- (1.395773997545618,8.992790232790588);
\draw [->,line width=0.5pt] (-0.29291295445904253,8.939054448273076) -- (0.3299045320249989,8.86746623143583);
\draw [rotate around={-2.121096396661411:(1.7094675285023833,7.6108436059598565)},line width=0.5pt,dash pattern=on 1pt off 1pt] (1.7094675285023833,7.6108436059598565) ellipse (0.8905164963388637cm and 0.7477949646866994cm);
\draw [->,line width=0.5pt] (1.395773997545618,8.992790232790588) -- (1.761668868769923,7.485813646476977);
\draw [->,line width=0.5pt] (1.395773997545618,8.992790232790588) -- (1.2175984208068522,7.807960622244584);
\draw [->,line width=0.5pt] (1.395773997545618,8.992790232790588) -- (2.141086418007328,7.958295877602801);
\draw [->,line width=0.5pt] (-0.6507275128210431,10.5) -- (-0.15072751282104313,10.);
\draw [->,line width=0.5pt] (0.08290013205157373,10.) -- (-0.41709986794842624,10.5);
\draw (-1.4,10.3) node[anchor=north west] {Step 2};
\draw (-6.1,10.2) node[anchor=north west] {Update the graph};
\draw (0,10.5) node[anchor=north west] {Update the graph};
\draw (-0.5,12.6) node[anchor=north west] {Type graph $G=(\T, E)$};
\draw (1.5,9.2) node[anchor=north west] {$t^{(1)}$};
\draw (-1.6,8.5) node[anchor=north west] {The ancestors of $t^{(1)}$};
\draw (1.6,9.8) node[anchor=north west] {$t'$};
\begin{scriptsize}
\draw [fill=gray] (-2.5,13.) circle (2.0pt);
\draw [fill=gray] (-1.5,11.5) circle (2.0pt);
\draw [fill=gray] (-0.5,12.) circle (2.0pt);
\draw [fill=gray] (-0.5,13.) circle (2.0pt);
\draw [fill=gray] (-1.5,13.5) circle (2.0pt);
\draw [fill=gray] (-2.5,12.) circle (2.0pt);
\draw [fill=gray] (-4.,12.5) circle (2.0pt);
\draw [fill=gray] (-3.429851853529851,12.411587624924588) circle (2.0pt);
\draw [fill=gray] (-4.188686952004661,11.946264215482488) circle (2.0pt);
\draw [fill=gray] (-3.565869465520619,11.874675998645243) circle (2.0pt);
\draw [fill=gray] (-2.678175576738766,10.815170389453996) circle (2.0pt);
\draw [fill=gray] (-1.7546875795382901,10.965505644812213) circle (2.0pt);
\draw [fill=gray] (1.5,9.5) circle (2.0pt);
\draw [fill=gray] (-2.134105128775695,10.49302341368639) circle (2.0pt);
\draw [fill=gray] (-5.488861080180847,7.986085603434901) circle (2.0pt);
\draw [fill=gray] (-3.488861080180844,7.986085603434901) circle (2.0pt);
\draw [fill=gray] (-3.488861080180844,8.986085603434901) circle (2.0pt);
\draw [fill=gray] (-5.488861080180847,8.986085603434901) circle (2.0pt);
\draw [fill=gray] (-4.488861080180845,7.486085603434901) circle (2.0pt);
\draw [fill=gray] (-4.488861080180845,9.486085603434901) circle (2.0pt);
\draw [fill=gray] (1.395773997545618,8.992790232790588) circle (2.0pt);
\draw [fill=gray] (0.46592214401576715,9.404377857715176) circle (2.0pt);
\draw [fill=gray] (-0.29291295445904253,8.939054448273076) circle (2.0pt);
\draw [fill=gray] (0.3299045320249989,8.86746623143583) circle (2.0pt);
\draw [fill=gray] (1.2175984208068522,7.807960622244584) circle (2.0pt);
\draw [fill=gray] (2.141086418007328,7.958295877602801) circle (2.0pt);
\draw [fill=gray] (1.761668868769923,7.485813646476977) circle (2.0pt);
\end{scriptsize}
\end{tikzpicture}

\caption{Visualization of the transformation for a single agent with a uniform type distribution: we start from a type graph $G(\T, E)$, where each edge $(t^{(1)}, t^{(2)})$ represents the agent weakly prefers the allocation and payment of type $t^{(2)}$ rather than his true type $t^{(1)}$. The weight of each edge is denoted in Eq.~(\ref{eq:weight-type-graph}). In the graph, we use solid lines to represent the positive-weight edges, and dashed lines to represent  zero-weight edges.  We first find a shortest cycle, and rotate the allocation and payment along the cycle  and update the graph (Step 1). We keep doing Step 1 to remove all  cycles. Then we pick  a source node $t^{(1)}$, and decrease the payment of type $t^{(1)}$ and all the ancestors of $t^{(1)}$ until we reduce the weight of one outgoing edge from $t^{(1)}$ to zero or we create a new zero-weight edge from $t'$ to $t^{(1)}$ or one of the ancestors of $t^{(1)}$ (Step 2).
\label{fig:single-agent-uniform}}
\vspace{-10pt}
\end{figure*}

For a single agent with non-uniform type distribution, we handle the unbalanced density probability of each type by redefining the type graph, where the weight of the edge in type graph is weighted by the product of the probability of the two nodes that are incident to an edge.  We propose a new Step 1 by introducing \emph{fractional rotation}, such that for each cycle in the type graph, we rotate the allocation and payment with a fraction for any type $t^{(j)}$ in the cycle. By carefully choosing the fraction for each type in the cycle, we can argue that our transformation preserves welfare and provides negligible revenue loss. %

The multi-agent setting reduces to the single-agent case, building a type graph for each agent induced by the interim rules (see Appendix~\ref{app:multi-agent-BIC} for the construction of this type graph).
With oracle access to the interim quantities of the original mechanism, we build the type graph of each agent $i$ in $\mathtt{poly}(|\T_i|)$ time.
We then apply the transform for each type graph of agent $i$, induced by the interim rules.

This is analogous to a replica-surrogate matching approach, which also defines the weights between replicas and surrogates by interim rules and runs the replica-surrogate matching for the reported type of each agent. Replica-surrogate matching uses this sampling technique to make the distribution of reported types of each agent equal to the distribution of the true type. In comparison, Steps 1 and 2 of our transform leave the type distribution unchanged, so that the transform attains this property for free. Then we can apply our transformation for each type graph separately. The new challenge in our transformation is feasibility, i.e., establishing consistency of the agent-wise rotations to interim quantities. We show the transformation for each type graph guarantees feasibility by appeal to Border's lemma~\citep{Border91}. Our transformation can also be directly applied to an $\eps$-EEIC mechanism in the case that each agent has an independent uniform type distribution.\footnote{This need to transform an infeasible, IC mechanism into a feasible and IC mechanism also arises in \citet{Narasimhan_UAI16}, who use a method from~\citet{Hashimoto18} to correct for feasibility violations that result from statistical machine learning while preserving strategy-proofness.} %

\section{Warm-up: Single agent with Uniform Type Distribution}\label{sec:single-agent-uniform}

In this section, we consider the  case of a single agent and a uniformly distributed type distribution $\F$, i.e. $\forall j \in[m], f(t^{(j)}) = \frac{1}{m}$. %
Even for this simple case, the proof is non-trivial. Moreover, the technique for this simple case can be extended to handle more intricate cases. The main result in this section is Theorem~\ref{thm:single-agent-uniform},
which makes use of a
constructive proof to modify a $\eps$-EEIC/$\eps$-BIC mechanism to a BIC mechanism.
An interesting observation is that $\eps$-EEIC may only provide  $m\eps$-BIC for  a uniform type distribution, which indicates that transforming $\eps$-EEIC may incur a worse revenue loss bound. However, Theorem~\ref{thm:single-agent-uniform} shows we can achieve the same revenue loss bound for both $\eps$-BIC and $\eps$-EEIC.
\begin{theorem}\label{thm:single-agent-uniform}
Consider a single agent, with $m$ different types $\T=\left\{t^{(1)},t^{(2)},\cdots,t^{(m)}\right\}$, and a uniform type distribution $\F$.
Given an $\epsilon$-EEIC/$\eps$-BIC and IR mechanism $\M$, which achieves $W$ expected social welfare and $R$ expected revenue, there exists a BIC and IR mechanism $\M'$ that achieves at least $W$ expected social welfare and $R- m\eps$ revenue.  Given an oracle access to $\M$, the running time of the transformation from $\M$ to $\M'$ is at most $\mathtt{poly}(|\T|)$.
\end{theorem}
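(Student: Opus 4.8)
The plan is to give a constructive transformation that operates directly on the type graph $G=(\T,E)$ defined in Section~\ref{sec:our-technique}, iterating the two steps (\emph{rotation} and \emph{payment reducing}) sketched there until the graph has no positive-weight edge, which is precisely the statement that the mechanism is BIC. First I would set up the invariants the iteration must maintain: (i) the (interim, hence here exact) allocation distribution over types is unchanged — so social welfare is exactly preserved — and in fact we may only \emph{reassign} existing outcome/payment pairs among types and decrease payments; (ii) IR is preserved, since Step~2 only decreases payments and Step~1 permutes $(x,p)$-pairs among types that all weakly prefer the new pair they receive; and (iii) a nonnegative potential $\Phi$, taken to be the total positive weight $\sum_{(j,k)\in E} w_{jk}^+$ of the graph (equivalently $\sum_j \mathrm{regret}(t^{(j)})$), strictly decreases at each step and, when it hits zero, certifies BIC.

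The core argument has three parts. \textbf{Step 1 (rotation) correctness:} if $G$ contains a directed cycle $\C$ with at least one positive-weight edge, every type on $\C$ weakly prefers the outcome/payment of its successor and at least one strictly prefers it; ``rotating'' the pairs $(x(t),p(t))$ backwards along $\C$ therefore gives every type on $\C$ a weakly-better (some strictly-better) utility, strictly decreases $\Phi$ (the strictly-positive edge disappears and no edge weight increases, because each type's utility only goes up while the offered pairs are the same multiset), and leaves the multiset of outcomes — hence welfare — unchanged. One must check that after rotation the graph is re-derived consistently; this is routine since edges only depend on the utilities $u(t^{(j)},\M'(t^{(k)}))$. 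Iterating Step~1 terminates because $\Phi$ takes finitely many values (bounded below, strictly decreasing by at least the smallest positive edge weight, which is bounded away from $0$ over the finitely many graphs reachable), leaving an acyclic graph. \textbf{Step 2 (payment reduction) correctness:} in a DAG with a source node $t^{(j)}$ having a positive outgoing edge, lower $p(t^{(j)})$ and $p(\cdot)$ of all ancestors of $t^{(j)}$ (here, none — $t^{(j)}$ is a source — so just $t^{(j)}$; more generally the "ancestors" language is for the later general case) by a common amount $\delta$, chosen maximal subject to not making any new edge positive and not over-shooting (stop when some outgoing edge weight reaches $0$, or a new incoming zero-weight edge appears, or a cycle would form). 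This is a well-defined finite linear-programming-flavoured step; it strictly decreases $\Phi$, does not touch allocations (so welfare is untouched), and if it ever creates a cycle we return to Step~1. \textbf{Revenue bound:} the only revenue changes are the $\delta$-decrements in Step~2; bounding their total is where the accounting happens — one shows the cumulative payment decrease for each type is at most $\eps$ (its initial regret), giving total loss at most $m\eps$; for the $\eps$-EEIC case one invokes that $\eps$-EEIC $\Rightarrow \eps$-EIIC (Lemma~\ref{lem:EEIC-decomposition}), and under a uniform distribution $\eps$-EIIC with respect to the averaged regret gives the same per-type budget, so the same $m\eps$ bound holds.

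I expect the main obstacle to be \textbf{termination and the revenue accounting together}: one must argue the alternation of Steps~1 and~2 cannot cycle forever (Step~2 can create cycles, sending us back to Step~1, which can in turn expose new sources for Step~2), and \emph{simultaneously} that the total payment decrement charged across the whole run — not just within one Step~2 invocation — is at most $m\eps$. The clean way is to track a single monovariant that dominates both the number of distinct graph structures visited and the remaining regret budget: $\Phi$ is nonincreasing throughout and strictly decreases at each rotation/reduction, and since every step either removes a positive edge or reduces a payment against a per-type regret budget, both termination in $\mathrm{poly}(m)$ steps (each of $\mathrm{poly}(m)$ cost, using the oracle to evaluate utilities) and the $R-m\eps$ revenue guarantee fall out. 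A secondary subtlety is ensuring Step~2's ``decrease until a new zero-weight edge is about to turn positive'' is the right stopping rule so that no edge weight ever becomes positive who wasn't already — i.e.\ Step~2 never increases $\Phi$ — which I would verify by a direct case check on how each edge weight $w_{\ell k}$ depends (affinely, with coefficient $0$ or $+1$) on the payments being lowered.
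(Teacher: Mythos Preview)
Your overall plan matches the paper's, but two steps would fail as written.

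First, the revenue accounting is wrong. You assert ``the cumulative payment decrease for each type is at most $\eps$ (its initial regret), giving total loss at most $m\eps$.'' If the per-type decrease were really bounded by $\eps$, the \emph{expected} revenue loss would be $\frac{1}{m}\sum_j \eps = \eps$, not $m\eps$, contradicting the $\Omega(m\eps)$ lower bound of Theorem~\ref{thm:lower-bound-revenue-loss}. In that lower-bound instance, type $t^{(m)}$ has its payment lowered by $(m-1)\eps$, not $\eps$. The correct accounting charges loss to the \emph{source being processed}: while reducing the regret of source $t$ to zero, at most $m$ types have their payments lowered, and the cumulative decrement is at most $\mathrm{reg}_0(t):=\max_k[u(t,\M(t^{(k)}))-u(t,\M(t))]$ (one must first check regrets never increase across either step). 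This yields expected loss $\le \frac{1}{m}\sum_t m\cdot\mathrm{reg}_0(t)=\sum_t\mathrm{reg}_0(t)\le m\eps$, which works for both $\eps$-BIC and uniform $\eps$-EEIC.

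Second, you dismiss the ancestor payment reductions in Step~2 as ``for the later general case,'' reasoning that a source has no ancestors. But ``source'' means no \emph{positive}-weight incoming edge; zero-weight incoming edges---produced by earlier Step~2 runs or by rotations---give a source ancestors. If you lower only $p(t)$, every zero-weight edge $(t',t)$ immediately turns positive, so your own stopping rule forces $\delta=0$ and the algorithm stalls with positive regret remaining. Lowering all ancestors' payments by the same $\delta$ is precisely what keeps those edges at weight zero and is the content of Claim~\ref{claim:single-agent-uniform-claim-2}. Two smaller points: (i) allocation-invariance does \emph{not} imply welfare is ``exactly preserved'' (welfare depends on which type gets which outcome, not just the multiset); the right argument is that each rotated type's utility weakly rises while the multiset of payments is unchanged, so welfare weakly \emph{increases}; (ii) Step~1 must pick a \emph{shortest} cycle---otherwise chords in $\C$ can become new positive edges after rotation, and your claim that ``no edge weight increases'' fails.
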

\begin{proof}[Proof Sketch]

\begin{figure*}[h]
\fcolorbox{black}{gray!50!white}{
\parbox{0.98\textwidth}{	
	
{\bf Step 1 (Rotation step).} Given the graph $G$ induced by  $\M=(x, p)$, find the shortest cycle $\C$ in $G$ that contains at least one edge with positive weight. Without loss of generality, we represent $\C = \left\{t^{(1)}, t^{(2)},\cdots, t^{(l)}\right\}$. Then  rotate the allocation and payment rules for these nodes in cycle $\C$. Now we slightly abuse the notation of subscripts, s.t. $t^{(l+1)} = t^{(1)}$. Specifically, the allocation and payment rules for each $t^{(j)}\in \C$,
$x'(t^{(j)}) = x(t^{(j+1)}), p'(t^{(j)}) = p(t^{(j+1)})$.
For other nodes, we keep the allocation and payment rules, i.e. $\forall j \notin [l], x'(t^{(j)}) = x(t^{(j)}), p'(t^{(j)}) = p(t^{(j)})$.  Then we update the mechanism $\M$ by adopting allocation and payment rules $x', p'$ to form a new mechanism $\M'$, and update the graph $G$ (We still use $G$ to represent the updated graph for notation simplicity). If there are no cycles in $G$ that contain at least one positive-weight-edge, move to Step 2. Otherwise, we repeat Step 1.

\noindent{\bf Step 2 (Payment reducing step).} Given the current updated graph $G$ and mechanism $\M'$,  pick up a source node $t$, i.e., a node with no incoming positive-weight edges. Let outgoing edges with positive weights associated with node $t$ be a set of $E_t$, and let $\overline{\eps}_t$ be the minimum non-negative regret of type $t$, i.e.
\begin{eqnarray}\label{eq:eps-bar}
\overline{\eps}_t = \min_{t^{(j)}: (t, t^{(j)})\in E_t} \left[u(t, \M'(t^{(j)})) - u(t, \M'(t))\right]
\end{eqnarray}

Consider the following set of nodes $S_t\subseteq \T$, such that $S_t = \{t\}\cup \{t'\big|t'\in \T \text{ is the ancestor  of } t\}$. The weight zero edge is also counted as a directed edge. 
Denote $\eps_t$ as
\begin{eqnarray}\label{eq:eps}
	\eps_t = \min_{t'\notin S_t, \bar{t}\in S_t}\left[u(t', \M'(t')) - u(t', \M'(\bar{t}))\right]
\end{eqnarray}
Then we decrease the expected payment of all $\bar{t}\in S_t$ by $\min\{\eps_t, \overline{\eps}_t\}$. This process will only create new edges with weight zero. If we create a new cycle with at least one edge with positive weight in $E$, we move to Step 1. Otherwise, we repeat Step 2.
}}
\caption{$\eps$-BIC/$\eps$-EEIC to BIC transformation for single agent with uniform type distribution}
\label{transform:single-agent-uniform}
\end{figure*}

We construct a weighted directed graph $G = (\T, E)$ induced by mechanism $\M$, following the approach shown in Section~\ref{sec:our-technique}.
We apply the iterations of Step 1 and Step 2 (see Fig.~\ref{transform:single-agent-uniform}), to reduce the total weight of edges in $E$ to zero. %

First, we show the  transformation maintains IR, since neither Step 1 nor Step 2 reduces utility. We then argue that the transformation in Fig.~\ref{transform:single-agent-uniform} will reduce the total weight of the graph to zero with no loss of social welfare, and  incur at most $m\eps$ revenue loss. To show this, we prove the following two auxiliary claims in Appendix~\ref{app:single-agent-uniform-claim-1} and \ref{app:single-agent-uniform-claim-2}, respectively.
\smallskip

\begin{claim}\label{claim:single-agent-uniform-claim-1}
Each Step 1 achieves the same revenue and incurs no loss of social welfare, and reduces the total weight of the graph by at least the weights of cycle $\C$.
\end{claim}

\begin{claim}\label{claim:single-agent-uniform-claim-2}
Each Step 2 can only create new edges with zero weight, and does not decrease social welfare. Each Step 2 will reduce the weight of each positive-weight, outgoing edge associated with $t$ by $\min\{\eps_t, \overline{\eps}_t\}$, where $\bar{\eps}_t$ and $\eps_t$ are defined in Eq.~(\ref{eq:eps-bar}) and Eq.~(\ref{eq:eps}) respectively.
\end{claim}

Given the above two claims, we  argue our transformation incurs no loss of social welfare. The transformation only loses revenue at Step 2, for each source node $t$, we decrease at most $m\min\{\eps_t, \overline{\eps}_t\}$ payments over all the types.\footnote{Actually, we can get a slightly tighter bound. Since no cycle exists in the type graph after Step 1, there is at least one node is not the ancestor of $t$. Therefore the revenue decrease is bounded by $(m-1)\min\{\eps_t, \overline{\eps}_t\}$, actually.} In this transformation, after each Step 1 or Step 2, the weight of the outgoing edge of each node $t$ is still bounded by $\max_{j} \left\{u(t, \M(t^{(j)})) - u(t, \M(t))\right\}$. This is because Step 1 does not create new outcome (allocation and payment) and Step 2 will not increase the weight of each edge.
Therefore, in Step 2, we  decrease payments by at most $m\max_{j} \left\{u(t, \M(t^{(j)})) - u(t, \M(t))\right\}$ in order to reduce the weights of all outgoing edges associated with $t$ to zero. Therefore, the total revenue loss in expectation is 
\begin{align*}
&\sum_{t\in \T} \frac{1}{m}\cdot m \max_{j} \left(u(t, \M(t^{(j)})) - u(t, \M(t))\right) \leq m\eps,
\end{align*}
where the inequality is because of the definition of $\eps$-BIC/$\eps$-EEIC mechanism.

\noindent{\bf Running time. } At each Step 1, we strictly reduce the weight of one edge with positive weight to 0 in the graph. The running time of each Step 1 and Step 2 is $\mathtt{poly}(|\T|)$. In total, there are at most $|\T|^2$ edges. Thus, the total running time is $\mathtt{poly}(|\T|)$.
\end{proof}

\subsection{Lower Bound on Revenue Loss}

In the transformation in Figure~\ref{transform:single-agent-uniform}, the revenue loss is bounded by $m\eps$. 
This revenue loss bound is tight  up to a constant factor
while insisting on maintaining social welfare.
\begin{theorem}\label{thm:lower-bound-revenue-loss}
There exists an $\eps$-BIC ($\eps$-EEIC) and IR mechanism $\M$ for a single agent \newnew{with uniform type distribution} for which  any $\eps$-BIC and IR to BIC and IR transformation (without loss of social welfare) must suffer at least $\Omega(m\eps)$ revenue loss.
\end{theorem}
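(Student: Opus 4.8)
The plan is to exhibit a single, explicit $\eps$-BIC (in fact also $\eps$-EEIC) and IR instance on which binding individual rationality forces every welfare-preserving transform to make a per-type price cut that \emph{cascades} along a chain of types, accumulating to total revenue loss $\Theta(m\eps)$. Concretely, take outcomes $o_1,\dots,o_m$ (and a null outcome of value $0$), a uniform type space $\T=\{t^{(1)},\dots,t^{(m)}\}$, the strictly decreasing price sequence $p_j = 1+(m-j)\cdot 2\eps$, and valuations $v(t^{(j)},o_j)=p_j$, $v(t^{(j)},o_{j+1})=p_{j+1}+\eps$ for $j<m$, and $v(t^{(j)},o_k)=0$ for all remaining $k$. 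Let $\M$ assign $o_j$ to $t^{(j)}$ and charge $p_j$.

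First I would check the properties of $\M$. Every type's truthful utility is exactly $v(t^{(j)},o_j)-p_j=0$, so $\M$ is IR with binding IR at every type. The only report that helps $t^{(j)}$ (for $j<m$) is $t^{(j+1)}$, which yields utility $v(t^{(j)},o_{j+1})-p_{j+1}=\eps$, while $t^{(m)}$ has no beneficial deviation; hence every type has regret at most $\eps$, so $\M$ is $\eps$-BIC, and its average ex-post regret is $\tfrac{m-1}{m}\eps<\eps$, so $\M$ is also $\eps$-EEIC. Finally, because $p_j>p_{j+1}+\eps$ and $p_j>0$, the outcome $o_j$ is the \emph{unique} maximizer of $v(t^{(j)},\cdot)$, so the welfare-maximizing allocation, which assigns each type its favorite outcome, is unique and is the one used by $\M$.

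Next I would pin down any admissible output $\M'=(x',p')$, i.e., a BIC, IR mechanism with $W^{\M'}(\F)\ge W^{\M}(\F)$. Since $\M$ already attains the maximum possible welfare and each type's maximizer is unique, welfare-preservation forces $x'(t^{(j)})=o_j$ for all $j$; only the prices are free. IR of $\M'$ then gives $p'_j\le v(t^{(j)},o_j)=p_j$, so $d_j:=p_j-p'_j\ge 0$. Writing the BIC constraint of $\M'$ for $t^{(j)}$ deviating to $t^{(j+1)}$, $v(t^{(j)},o_j)-p'_j \ge v(t^{(j)},o_{j+1})-p'_{j+1}$, and substituting $v(t^{(j)},o_j)=p_j$ and $v(t^{(j)},o_{j+1})=p_{j+1}+\eps$, rearranges to $d_j-d_{j+1}\ge \eps$ for each $j=1,\dots,m-1$. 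Telescoping with $d_m\ge 0$ gives $d_j\ge (m-j)\eps$, hence $R^{\M}(\F)-R^{\M'}(\F)=\tfrac1m\sum_{j=1}^m d_j\ge \tfrac{\eps}{m}\sum_{j=1}^m(m-j)=\tfrac{(m-1)\eps}{2}=\Omega(m\eps)$. The same $\M$ (already verified $\eps$-EEIC) handles the $\eps$-EEIC case.

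The step that needs the most care is the design of $\M$, since the prices must do three jobs at once: decrease by strictly more than $\eps$ per step so that $o_j$ is \emph{uniquely} welfare-optimal for $t^{(j)}$ (which is what forces $\M'$'s allocation and hence pushes all the burden onto prices); make truthful utility identically zero so that IR of $\M'$ bites as $p'_j\le p_j$ and the revenue loss $d_j$ cannot be recouped at any type; and leave the single ``one step up'' chain as the only deviation with regret near $\eps$ so that $\M$ is genuinely $\eps$-BIC and $\eps$-EEIC; the choice $p_j=1+(m-j)\cdot 2\eps$ with off-chain valuations zero satisfies all three. Once the instance is fixed, the remainder is the short telescoping computation above; the conceptual crux is that binding IR everywhere converts each of the $m-1$ local $\eps$-violations into an irreversible price cut that compounds down the chain.
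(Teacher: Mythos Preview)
Your proposal is correct and follows essentially the same approach as the paper: build a chain of $m$ types where IR binds at every type and each type has regret exactly $\eps$ toward one neighbor, so that the BIC constraints force price cuts that telescope to $\sum_{j}(m-j)\eps/m=\tfrac{(m-1)\eps}{2}$.

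The one notable difference is in the construction of the instance. The paper's example has $v(t^{(j)},o_{j-1})=v(t^{(j)},o_j)=j\eps$, so the welfare-maximizing outcome for each type is \emph{not} unique; the paper then asserts ``we cannot change the allocation'' and proceeds with a payment-only argument without fully justifying why an allocation change cannot help. Your construction, by taking $p_j-p_{j+1}=2\eps>\eps$ so that $v(t^{(j)},o_j)>v(t^{(j)},o_{j+1})$, makes $o_j$ the strict and unique maximizer for $t^{(j)}$, which cleanly forces $x'(t^{(j)})=o_j$ (even for randomized $x'$) and removes that gap. The resulting telescoping computation and the final bound $\tfrac{(m-1)\eps}{2}=\Omega(m\eps)$ are identical to the paper's.
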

\begin{proof}
Consider a single agent with $m$ types, $\T=\{t^{(1)}, \cdots, t^{(m)}\}$ and $f(t^{(j)})=1/m, \forall j$. There are $m$ possible outcomes. The agent with type $t^{(1)}$ values outcome 1 at $\eps$ and the other outcomes at $0$. For any type $t^{(j)}, j\geq 2$, the agent with type $t^{(j)}$ values outcome $j-1$ at $j\eps$, outcome $j$ at $j\eps$, and the other outcomes at $0$. The original mechanism is: if the agent reports type $t^{(j)}$, gives the outcome $j$ to the agent and charges $j\eps$. There is a $\eps$ regret to an agent with type $t^{(j+1)}$ for not reporting type $t^{(j)}$, thus the mechanism is $\eps$-BIC. Since this $\eps$-BIC mechanism already maximizes social welfare, we cannot change the allocation in the transformation. Thus, we can only change the payment of each type to reduce the regret. Consider the sink node $t^{(1)}$, to reduce the regret of the agent with type $t^{(2)}$ for not reporting $t^{(1)}$, we can increase the payment of type $t^{(1)}$ or decrease the payment of type $t^{(2)}$. However, increasing the payment of type $t^{(1)}$ breaks IR, then we can only decrease the payment of $t^{(2)}$. To reduce the regret between $t^{(2)}$ to $t^{(1)}$, we need to decrease the payment of $t^{(2)}$ at least by $\eps$. After this step, the regret of type $t^{(3)}$ for not reporting $t^{(2)}$ will be at least $2\eps$ and $t^{(2)}$ will be the new sink node. Similarly, $t^{(3)}$ needs to decrease at least $2\eps$ payment (if $t^{(2)}$ increase the payment, it will envy the output of $t^{(1)}$ again). So on and so forth, and in total, the revenue loss is at least $\frac{\eps + 2\eps+\cdots+(m-1)\eps}{m} = \frac{(m-1)\eps}{2}$.
\end{proof}

\subsection{Tighter Bound of Revenue Loss for Settings with  Finite Menus}

In some settings, the total number of possible types of an agent may be very large and yet the menu size can remain relatively small.  In particular, suppose that a mechanism $\M$ has a small number of outputs, i.e., $|\M| = C$ and $C\ll m$, where $m$ is the number of types and $C$ is the menu size.  Given this, we can provide a tighter bound on revenue loss for this setting in the following theorem. The complete proof is deferred to Appendix~\ref{app:vanish-ic-outcome}.
\begin{theorem}\label{thm:vanish-ic-outcome}
Consider a single agent with $m$ different types $\T=\{t^{(1)}, t^{(2)},\cdots, t^{(m)}\}$, sampled from a uniform type distribution $\F$. Given an $\epsilon$-BIC mechanim $\M$ with $C$ different menus ($C \ll m$) that achieves $S$ expected social welfare and $R$ revenue, there exists an BIC mechanism $\M'$ that  achieves at least $S$ social welfare and $R - C\eps$ revenue.
\end{theorem}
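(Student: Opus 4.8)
The plan is to run the same two-step transformation of Figure~\ref{transform:single-agent-uniform} on the type graph $G=(\T,E)$ of $\M$, and to show that the extra structure of having only $C$ distinct outputs sharpens the revenue-loss accounting from $m\eps$ to $C\eps$, while welfare- and IR-preservation carry over verbatim from Theorem~\ref{thm:single-agent-uniform} (Claims~\ref{claim:single-agent-uniform-claim-1}--\ref{claim:single-agent-uniform-claim-2}). So the whole content of the theorem is in re-doing the revenue bookkeeping.

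First I would record the structural facts that make the improvement possible. Partition $\T$ into at most $C$ classes according to the output each type receives under $\M$; two types in the same class are joined in $G$ by a pair of zero-weight edges, hence have identical ancestor sets throughout the run. Consequently: (i) in every execution of Step 2 they are simultaneously inside or outside the set $S_t$, so same-class types are never separated and their payments move in lockstep; (ii) Step 1 only permutes outputs among types (preserving revenue, and---by the telescoping argument behind Claim~\ref{claim:single-agent-uniform-claim-1}, i.e.\ summing the edge inequalities around the cycle and noting the payments telescope---weakly increasing welfare), and neither step ever increases the number of distinct outputs (payments only decrease, and outputs with distinct allocations stay distinct), so at every moment at most $C$ distinct outputs are in play.

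Next comes the refined accounting. As in the proof of Theorem~\ref{thm:single-agent-uniform}, revenue is lost only in Step 2, and a Step 2 that lowers the payments of $S_t$ by $\delta$ costs $\tfrac1m|S_t|\delta\le\delta$ in expected revenue, so the total loss is at most $\sum_{\text{Step 2}}\delta$. I would bound this sum by a potential $\Phi$ equal to the sum, over the at most $C$ distinct outputs currently offered, of that output's current maximum regret (over the types holding it) toward some other offered output. Since $\M$ is $\eps$-BIC, initially every output's holders have regret at most $\eps$, so $\Phi\le C\eps$ at the start. I would then argue that Step 1 does not increase $\Phi$ (each rotated type moves onto a weakly-preferred output, so its regret weakly decreases, and existing edge weights never increase---the same ancestor argument that yields ``Step 2 only creates zero-weight edges'' shows an edge $a\to o_b$ can only have its weight lowered), and that each Step 2 decreases $\Phi$ by at least the $\delta$ it uses; summing over all steps gives $\sum_{\text{Step 2}}\delta\le \Phi_{\text{initial}}\le C\eps$, hence revenue loss at most $C\eps$. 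The running-time bound $\mathrm{poly}(|\T|)$ is inherited from Theorem~\ref{thm:single-agent-uniform}, as we run the identical procedure.

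The main obstacle is precisely the ``$\Phi$ drops by $\delta$'' step: a Step 2 that reduces the source's output $o^\ast$ by $\delta$ may simultaneously reduce the payments of other outputs in $S_t$, so a type holding $o^\ast$ whose worst envy is toward one of those co-reduced outputs sees its regret unchanged, and naively $\Phi$ need not drop by the full $\delta$. Resolving this will require either a more careful potential (e.g.\ weighting each output by how ``deep'' in $S_t$ it sits, or charging $\delta$ to the unique source node rather than to $o^\ast$) or a processing order that never reduces an output together with an output it envies. An alternative route is to quotient $G$ down to the at most $C$ distinct outputs and invoke the single-agent non-uniform transformation on the resulting $C$-node instance (whose type distribution is $(k_1/m,\dots,k_C/m)$), but that must be done with care: types sharing an output may have different valuations, so the quotient edge weights must be taken as the worst ($\max$) regret within each class, and Step 1 must be replaced by its fractional-rotation variant in order to keep welfare preserved under the non-uniform weights.
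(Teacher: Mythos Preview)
Your approach---run the identical two-step transformation and sharpen the revenue accounting via the menu partition---is exactly the paper's, and your structural observations (same-menu types are linked by mutual zero-weight edges, $S_t$ is a union of whole menu classes, at most $C$ distinct outputs persist throughout) all match the paper's proof.

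The obstacle you flag is not real, and the paper dissolves it with a one-line no-cycle observation you overlooked: if $t^{(j)}$ shares the current menu of the source $t$ and has a positive-weight edge to some $t^{(k)}$, then $t^{(k)}\notin S_t$. For if $t^{(k)}$ were an ancestor of $t$, concatenating the path $t^{(k)}\to\cdots\to t$, the zero-weight same-menu edge $t\to t^{(j)}$, and the positive-weight edge $t^{(j)}\to t^{(k)}$ would close a cycle containing a positive-weight edge---contradicting the invariant that all such cycles have been removed before Step~2 runs. Hence every positive-regret edge out of a source-menu type points \emph{outside} $S_t$ and therefore drops by the full $\delta$. In your potential language, the source-menu summand of $\Phi$ falls by $\delta$, and no other summand rises (any edge from outside $S_t$ into $S_t$ would force its tail into $S_t$ by the definition of ancestor). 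Combined with the invariant that every edge weight stays bounded by $\eps$ throughout (Step~1 only permutes outputs and raises truthful utility; Step~2 never raises a positive weight), the total $\delta$ charged against any fixed menu is at most $\eps$, and summing over the $C$ menus gives the $C\eps$ bound. No special processing order and no quotient-to-$C$-nodes construction is needed; the missing piece was precisely this cycle argument.
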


\section{Single Agent with General Type Distribution}\label{sec:single-agent-general}

In this section, we consider a setting with a single agent that has a non-uniform type distribution.
A naive idea is that we can ``divide'' a type with a larger probability to several copies of the same type, each with equal probability, and then apply our proof of Theorem~\ref{thm:single-agent-uniform} to get a BIC mechanism. However, this would result in a weak bound on the revenue loss, since we would divide the $m$ types into multiple, small pieces. %
This section is divided into two parts. First we show our transformation for an  $\eps$-BIC mechanism in this setting. Second, we show an impossibility result for an $\eps$-EEIC mechanism, that is, without loss of welfare, no transformation can  achieve negligible revenue loss.

\subsection{$\eps$-BIC to BIC Transformation}
We propose a novel approach for a construction for the case of a single agent with a non-uniform type distribution. The proof is built upon Theorem~\ref{thm:single-agent-uniform}, however, there is a technical difficulty to directly apply the same approach for this non-uniform type distribution case. Since each type has a different probability, we cannot rotate the allocation and payment in the same way as in Step~1 in the proof of Theorem~\ref{thm:single-agent-uniform}. 

We instead redefine the type graph $G=(\T, E)$, where the weight of the edge is now weighted by the product of the probability of the two nodes that are incident to an edge. We also modify the original rotation step shown in Fig.~\ref{transform:single-agent-uniform} in Appendix~\ref{app:single-agent-general}: for each cycle in the type graph, we rotate the allocation and payment with the fraction of $\frac{f(t^{(k)})}{f(t^{(j)})}$ for any type $t^{(j)}$ in the cycle, where $f(t^{(k)})$ is the smallest type probability of the types in the cycle. This step is termed as "fractional rotation step." We summarize the results in Theorem~\ref{thm:single-agent-general} and show the proof in Appendix~\ref{app:single-agent-general}.
\begin{theorem}\label{thm:single-agent-general}
Consider a single agent with $m$ different types, $\T=\left\{t^{(1)},t^{(2)},\cdots,t^{(m)}\right\}$ drawn from a general type distribution $\F$. %
Given an $\eps$-BIC and IR mechanim $\M$ that achieves $W$ expected social welfare and $R$ expected revenue, there exists a BIC and IR mechanism $\M'$ that achieves at least $W$ social welfare and $R - m\eps$ revenue.
\end{theorem}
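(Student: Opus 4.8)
The plan is to adapt the constructive transformation from Theorem~\ref{thm:single-agent-uniform} to the non-uniform setting by working with a \emph{weighted} type graph and a \emph{fractional rotation step}. First I would build the directed type graph $G = (\T, E)$ with an edge from $t^{(j)}$ to $t^{(k)}$ whenever $u(t^{(j)}, \M(t^{(k)})) \geq u(t^{(j)}, \M(t^{(j)}))$, but now assign each edge the weight $f(t^{(j)}) f(t^{(k)}) \cdot \big(u(t^{(j)}, \M(t^{(k)})) - u(t^{(j)}, \M(t^{(j)}))\big)$, so that the total graph weight tracks a probability-weighted measure of regret. As in the uniform case, the transformation iterates two steps: a fractional rotation step (Step~1) that destroys cycles containing a positive-weight edge, and a payment-reducing step (Step~2) that drives down the regret at source nodes by lowering payments of a source and all its ancestors simultaneously.

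The key new ingredient is the fractional rotation. Given a cycle $\C = \{t^{(1)}, \dots, t^{(l)}\}$ with $f(t^{(k)})$ the minimum type probability among the cycle's nodes, I would define, for each $t^{(j)} \in \C$, a new allocation and payment that is a convex combination: with probability $\frac{f(t^{(k)})}{f(t^{(j)})}$ the type $t^{(j)}$ receives the outcome-payment pair $(x(t^{(j+1)}), p(t^{(j+1)}))$ of its cycle-descendant, and with the remaining probability it keeps its own $(x(t^{(j)}), p(t^{(j)}))$ (indices mod $l$, so $t^{(l+1)} = t^{(1)}$). I would then verify the three invariants that make the argument go through. \emph{Welfare:} since every type in the cycle weakly prefers its descendant's outcome to its own (that is what the edge means), reassigning mass toward descendants can only weakly increase each type's interim value, hence weakly increases social welfare. \emph{Revenue and the ``mass-conservation'' accounting:} the fraction $\frac{f(t^{(k)})}{f(t^{(j)})}$ is chosen precisely so that the \emph{ex ante} probability mass flowing out of each outcome-payment pair $(x(t^{(j)}), p(t^{(j)}))$ equals the mass flowing in, namely $f(t^{(k)})$ units per cycle node; this is the analogue of the clean rotation in the uniform case and ensures Step~1 leaves expected revenue (indeed the whole ex ante distribution over outcomes) exactly unchanged. \emph{Progress:} the node attaining the minimum probability $f(t^{(k)})$ gets its outcome fully replaced by its descendant's, so at least one positive-weight edge of $\C$ is eliminated, and one can check no new positive-weight edge is created by the rotation, so the total (weighted) graph weight strictly decreases and the number of cycle-breaking iterations is bounded by $|E| \le m^2$.

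For Step~2 I would reuse the uniform-case analysis essentially verbatim: decreasing the payment of a source node and all its ancestors by a common amount $\min\{\eps_t, \overline{\eps}_t\}$ (defined as in Eq.~(\ref{eq:eps-bar}) and Eq.~(\ref{eq:eps})) only creates zero-weight edges, never increases any edge weight, and leaves allocations — hence welfare — untouched; if a positive-weight cycle appears we return to Step~1. Termination then follows because each outer pass strictly reduces total graph weight while the per-node outgoing weight stays bounded by the original $\max_j \big(u(t,\M(t^{(j)})) - u(t,\M(t))\big)$. The revenue-loss bound is the same charging argument as before: all revenue loss occurs in Step~2, and for each type $t$ the total payment reduction attributable to $t$ over the whole run is at most $\max_j \big(u(t,\M(t^{(j)})) - u(t,\M(t))\big)$, so in expectation the loss is at most $\sum_{t} f(t) \cdot m \max_j\big(u(t,\M(t^{(j)}))-u(t,\M(t))\big)$; wait, more carefully, since there are at most $m$ nodes whose payment drops per Step~2 and the per-type total drop is bounded by its original regret, the expected revenue loss is at most $m \sum_t f(t)\,\mathrm{regret}(t) \le m\eps$ by $\eps$-BIC. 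I expect the main obstacle to be exactly the bookkeeping for Step~1: verifying that the fractional rotation (i) genuinely conserves the ex ante outcome distribution despite the non-uniform weights, (ii) does not manufacture new positive-weight edges, and (iii) strictly decreases the weighted total — these are the places where the non-uniformity breaks the naive rotation and where the fraction $\frac{f(t^{(k)})}{f(t^{(j)})}$ has to be shown to be the right choice; IR is immediate since neither step ever decreases any type's utility.
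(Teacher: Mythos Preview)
Your proposal matches the paper's proof: same probability-weighted type graph, same fractional rotation with ratio $f(t^{(k)})/f(t^{(j)})$ keyed to the minimum-probability node of the (shortest) cycle, and Step~2 carried over unchanged from the uniform case.

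One correction is needed. Your assertion that the fractional rotation ``creates no new positive-weight edge'' is false in general: a node $t\notin\C$ that previously had a positive edge only into $t^{(j+1)}$ can acquire a new positive edge into $t^{(j)}$ once $t^{(j)}$'s outcome becomes a convex combination involving $\M(t^{(j+1)})$. What the paper actually proves---and what suffices---is the weaker statement that the \emph{total} weight of edges from any fixed $t\notin\C$ into the cycle does not increase, obtained by writing $u(t,\M'(t^{(j)}))$ as the convex combination and applying $[x+y]_+\le[x]_+ + [y]_+$; the telescoping over the cycle then cancels. Combined with the fact that the edge $(t^{(k)},t^{(k+1)})$ is destroyed (here the shortest-cycle assumption, which you should state, rules out chords), this yields the strict decrease in total graph weight you need. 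Your Step~2 revenue accounting is also slightly garbled: the clean bound is that handling a source $t$ costs at most $\big(\sum_{t'\in S_t} f(t')\big)\cdot\mathrm{regret}(t)\le \mathrm{regret}(t)\le\eps$ in expected revenue, summing to $m\eps$ over the $m$ types; your expression $m\sum_t f(t)\,\mathrm{regret}(t)$ happens to be $\le m\eps$ but is not the quantity the argument actually controls.
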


\noindent{\bf Allocation-invariant Transformation.}
In addition to the welfare and revenue guarantee achieved by this transformation, the transform has another desired property, as defined below.
\begin{definition}[Allocation-invariance property]\label{def:allocation-invariant}	
Two mechanisms $\M = (x, p)$ and $\M' = (x', p')$ are (ex ante) allocation-invariant if and only if $\sum_{t\in \T} f(t)x(t) = \sum_{t\in \T}f(t) x'(t)$.
\end{definition}

For the single agent setting with a general type distribution, the transform only changes the allocation rules in Step 1. Since we use the \emph{fractional rotation} in Step 1, the quantity $\sum_{t\in \T} f(t) x(t)$ is maintained after each Step 1. Then, it is straightforward to show that the transform satisfies this allocation-invariance property.\footnote{By contrast, the previous transformations~\citep{DasWeinberg12, Rubinstein18, Cai19} cannot preserve the distribution of the allocation, even for the single agent and uniform type distribution case.}

\subsection{Impossibility Result for $\eps$-EEIC Transformation}

As mentioned above, given any $\eps$-BIC for a single agent with a general type distribution, we can transform to an exactly BIC mechanis with no loss of welfare and negligible loss of revenue. However, the same claim doesn't hold for $\eps$-EEIC. Theorem~\ref{thm:impossibility-EEIC} shows that  no transformation can achieve negligible revenue loss while insisting on welfare preservation. The proof is provided in Appendix~\ref{app:impossibility-EEIC}.
\begin{theorem}\label{thm:impossibility-EEIC}
There exists a single agent with a non-uniform type distribution,  and an $\eps$-EEIC and IR mechanism, for which there is no IC transformation that preserves social welfare and IR and achieves negligible revenue loss.
\end{theorem}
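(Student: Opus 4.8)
The plan is to exhibit an explicit single-agent instance with a non-uniform type distribution in which welfare preservation completely pins down the allocation, and then show that the incentive constraints needed to remove the (unavoidable) envy \emph{cascade} into a payment drop on a constant-probability type --- exactly the kind of cascade used in the proof of Theorem~\ref{thm:lower-bound-revenue-loss}, but now with the step sizes blown up by $1/\eps$. The conceptual point is that $\eps$-EEIC only bounds the \emph{probability-weighted} ex-post regret, so a type of vanishing probability $\Theta(\eps)$ may carry $\Theta(1)$ regret; repairing that type's envy drags down the payments of its ancestors in the type graph, and we place almost all of the probability mass there.

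Concretely, I would take three types $\{t^{(1)},t^{(2)},t^{(3)}\}$ with $f(t^{(1)})=f(t^{(2)})=\eps$ and $f(t^{(3)})=1-2\eps$, and three outcomes $\{o_1,o_2,o_3\}$ with valuations chosen so that $o_j$ is the \emph{unique} welfare maximizer for $t^{(j)}$: e.g.\ $v_1(o_1)=1$, $v_2(o_2)=10>v_2(o_1)=9$, $v_3(o_3)=20>v_3(o_2)=19$, and all remaining values $0$. Let $\M$ be $t^{(j)}\mapsto(o_j,p_j)$ with $(p_1,p_2,p_3)=(1,3,4)$. One then checks routine facts: $\M$ is IR and (uniquely) welfare-maximizing; $t^{(1)}$ and $t^{(3)}$ have zero regret while $t^{(2)}$ has regret exactly $1$ (it strictly prefers the option $(o_1,1)$), so the probability-weighted ex-post regret is exactly $\eps$ and $\M$ is $\eps$-EEIC --- though only $1$-BIC, which is why the analogous positive result fails here. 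Each of these is a one-line verification from the numbers above.

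Next I would argue that any IR, IC mechanism $\M'$ that preserves welfare must use exactly the allocation $t^{(j)}\mapsto o_j$: since each $o_j$ is the \emph{strict} unique maximizer of $v_j(\cdot)$, any other pure allocation strictly lowers welfare, and $\E_{o\sim x'(t^{(j)})}[v_j(o)] \le v_j(o_j)$ with equality only if $x'(t^{(j)})$ places all its mass on $o_j$, so any nondegenerate randomization also strictly lowers welfare; hence $W^{\M'}\ge W^{\M}$ forces the allocation. With the allocation fixed, IC plus IR is just a feasibility question on $(p_1',p_2',p_3')$: IR gives $p_1'\le 1$; ``$t^{(2)}$ does not envy $o_1$'' gives $10-p_2'\ge 9-p_1'$, i.e.\ $p_2'\le 1+p_1'\le 2$; ``$t^{(3)}$ does not envy $o_2$'' gives $20-p_3'\ge 19-p_2'$, i.e.\ $p_3'\le 1+p_2'\le 3$; and all remaining incentive inequalities (e.g.\ $t^{(3)}$ versus $o_1$) are slack for the chosen numbers. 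Therefore $R^{\M'}=\eps p_1'+\eps p_2'+(1-2\eps)p_3'\le 3-3\eps$, whereas $R^{\M}=4-4\eps$, so every welfare-preserving, IR, IC transformation loses at least $1-\eps$ in revenue; since $1-\eps\not\to 0$ as $\eps\to 0$, no transformation achieves negligible revenue loss.

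The step to be careful about --- the ``main obstacle,'' such as it is --- is precisely the claim that welfare preservation forces the allocation: one must rule out both welfare-neutral reallocations and randomized allocations that keep expected welfare at $W^{\M}$, which is where the \emph{strictness} of the welfare maximizers is essential; the valuation gaps ($10$ vs.\ $9$, $20$ vs.\ $19$) are chosen with room to spare precisely so that this holds robustly. A secondary check, which should be stated explicitly, is that no ``long-range'' incentive constraint becomes binding in a way that lets $\M'$ recover revenue --- immediate from the numbers but worth verifying. Everything else is direct computation, and one can note that scaling $t^{(2)}$'s regret keeps the loss bounded below by a fixed positive constant uniformly in small $\eps$.
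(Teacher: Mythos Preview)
Your proposal is correct and follows the same high-level strategy as the paper: exhibit a low-probability type whose $\Theta(1)$ regret is permitted under $\eps$-EEIC, then show that the incentive-repair cascades onto a constant-probability ancestor in the type graph, forcing a constant revenue drop.

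The concrete constructions differ, and yours is in some ways tighter. The paper uses $m$ types with $f(t^{(2)})=\tfrac{\eps}{2m}$ carrying regret $m$, and a chain $t^{(m)}\to\cdots\to t^{(2)}$ of $\eps$-regrets on top; the valuation ties $v_j(o_{j-1})=v_j(o_j)$ mean the welfare-maximizing allocation is \emph{not} strictly unique, so the paper's assertion that ``we can only change the payment'' is stated rather than fully argued. Your three-type instance with strict gaps ($10>9$, $20>19$) pins down the allocation cleanly, including against randomization, which is exactly the step you flag as the main obstacle. The trade-off is that the paper's construction, if one does close the allocation-uniqueness gap, yields a lower bound that scales with $m$ (loss $\ge m/2$), whereas yours gives a fixed constant ($\ge 1-\eps$); for the bare impossibility statement, your version suffices and is more self-contained.
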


\section{Multiple Agents with Independent Private Types}\label{sec:multi-agent}
First, we state our positive result for a setting with multiple agents and independent, private types  (Theorem~\ref{thm:multi-agent-BIC}). We assume each agent $i$'s type $t_i$ is independently drawn from  $\F_i$ \newnew{($\F_i$ can be non-uniform)}. Then $\F$ is a product distribution that can be denoted as $\times_{i=1}^n \F_i$. 
The complete proof of the following theorem is shown in Appendix~\ref{app:multi-agent-BIC}.
\begin{theorem}\label{thm:multi-agent-BIC}
With $n$ agents and independent private types, and an $\eps$-BIC and IR mechanism $\M$ that achieves $W$ expected social welfare and $R$ expected revenue, there exists a BIC and IR mechanism $\M'$ that achieves at least $W$ social welfare and $R - \sum_{i=1}^n |\T_i|\eps$ revenue. The same result holds for an $\eps$-EEIC mechanism with multiple agents, in the case that each agent has an independent uniform type distribution. Given an oracle access to the interim quantities of $\M$, the running time of the transformation from $\M$ to $\M'$ is at most $\mathtt{poly}(\sum_i|\T_i|)$.
\end{theorem}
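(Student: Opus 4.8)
The plan is to reduce the multi-agent problem to $n$ independent single-agent transformations, one per agent, and then to argue that these agent-wise modifications can be assembled into a single feasible mechanism. The first step is, for each agent $i$, to build a directed weighted type graph $G_i = (\T_i, E_i)$ exactly as in Section~\ref{sec:our-technique}, but using the \emph{interim} rules $X_i, P_i$ of $\M$ (available by Assumption~\ref{asmp:interim-oracle}) in place of the ex-post rules: there is an edge from $t_i^{(j)}$ to $t_i^{(k)}$ whenever the interim utility of type $t_i^{(j)}$ weakly improves by reporting $t_i^{(k)}$, and the edge weight is the corresponding interim regret, scaled by $f_i(t_i^{(j)}) f_i(t_i^{(k)})$ as in the non-uniform single-agent case. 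Since $\M$ is $\eps$-BIC, the max outgoing weight at each node is at most (proportional to) $\eps$. I would then run the fractional-rotation step (Step 1) and payment-reducing step (Step 2) of Theorem~\ref{thm:single-agent-general} on each $G_i$ separately, obtaining for each agent a modified interim allocation rule $X_i'$ and interim payment rule $P_i'$. By Claims~\ref{claim:single-agent-uniform-claim-1} and~\ref{claim:single-agent-uniform-claim-2} (and their non-uniform analogues used in Theorem~\ref{thm:single-agent-general}), this is interim-welfare-preserving for each agent, loses at most $|\T_i|\eps$ interim revenue for agent $i$, and yields an interim-IC, interim-IR rule; summing over $i$ gives welfare at least $W$ and revenue at least $R - \sum_i |\T_i|\eps$. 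The running time is $\mathtt{poly}(|\T_i|)$ per agent, hence $\mathtt{poly}(\sum_i |\T_i|)$ overall.

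The key subtlety — and the part I expect to be the main obstacle — is \emph{feasibility}: the single-agent transformation only tells us how the interim allocation $X_i$ of each agent should change, and a priori there need not exist an ex-post allocation rule $x': \T \to \Delta(\O)$ realizing all the new interim rules $\{X_i'\}_{i=1}^n$ simultaneously. Here I would invoke Border's lemma~\citep{Border91}: the original interim rule $\{X_i\}$ is feasible (it comes from $\M$), and the crucial observation is that Step 1's \emph{fractional rotation} is designed precisely so that $\sum_{t_i} f_i(t_i) x_i(t_i; t_{-i})$ — i.e., the ex-ante allocation probabilities — are preserved for each fixed $t_{-i}$; in fact the rotation just permutes (fractionally) which type receives which ex-post allocation. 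Thus the new interim rule $X_i'$ is obtained from $X_i$ by a doubly-stochastic-type mixing that keeps the ex-ante marginals on $\O$ fixed, and Border's feasibility conditions (which only constrain, for each downward-closed set of types, the total interim probability mass) continue to hold. This also establishes the ex-ante allocation-invariance claim: $\sum_t f(t) x'(t) = \sum_t f(t) x(t)$, since Step 2 changes only payments and Step 1 preserves this sum coordinate-wise. I would make this precise by checking Border's inequalities are unaffected by the rotation, which is the technical heart of the argument.

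Two remaining points round out the proof. First, going from interim IC to BIC: an interim-IC interim rule, realized by \emph{some} ex-post mechanism via Border, is by definition a BIC mechanism, and interim IR is likewise preserved by the single-agent transformation (neither Step 1 nor Step 2 decreases any type's interim utility), with ex-post IR handled as in the single-agent case since Step 2 only decreases payments and Step 1 only permutes existing ex-post outcomes. Second, the $\eps$-EEIC claim for uniform type distributions: by Lemma~\ref{lem:EEIC-decomposition} an $\eps$-EEIC mechanism is $\eps$-EIIC, and for uniform marginals the per-agent interim-regret graph $G_i$ has total outgoing weight at each node bounded appropriately so that the same $|\T_i|\eps$ accounting goes through — exactly as in the warm-up, where $\eps$-EEIC and $\eps$-BIC gave the same bound. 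The one genuinely new ingredient relative to the single-agent sections is thus the Border-lemma feasibility check for the agent-wise rotations, which is where I would concentrate the detailed argument.
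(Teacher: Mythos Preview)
Your proposal is correct and follows essentially the same route as the paper's proof: reduce to $n$ single-agent problems via the interim (induced) mechanisms $\widetilde{\M}_i$, apply the fractional-rotation/payment-reducing transformation of Theorem~\ref{thm:single-agent-general} to each $G_i$, sum the per-agent bounds, and verify feasibility of the resulting interim allocations via Border's lemma using the invariance $\sum_{t_i} f_i(t_i) X_i'(t_i) = \sum_{t_i} f_i(t_i) X_i(t_i)$. One minor imprecision: the rotation acts on the \emph{interim} allocations $X_i(t_i)$, not on the ex-post rules $x_i(t_i; t_{-i})$ ``for each fixed $t_{-i}$'' as you wrote, so the preserved quantity is $\sum_{t_i} f_i(t_i) X_i(t_i)$ rather than a statement conditional on $t_{-i}$ --- but this is exactly what Border's lemma needs, and the paper makes the same argument (Eq.~(\ref{eq:allocation-invariant-multi-agent})).
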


\noindent{\bf Allocation-invariant Transformation.} The transformation
for multiple agents with independent private types is also
allocation-invariant. To prove this, we can observe for $\M=(x, p)$ that
\begin{eqnarray*}
\sum_{t\in \T} f(t)x(t) = \sum_{t_i\in \T_i} f_i(t_i) \cdot\E_{t_{-i}\sim \F_{-i}} [x(t_i, t_{-i})] =  \sum_{t_i\in \T_i} f_i(t_i) X_i(t_i).
\end{eqnarray*}

Then, by Eq.~(\ref{eq:allocation-invariant-multi-agent}) in the proof of Theorem~\ref{thm:multi-agent-BIC} (Appendix~\ref{app:multi-agent-BIC}), we have $\sum_{t\in \T} f(t)x'(t) = \sum_{t\in \T} f(t)x(t)$ for the transformed mechanism $\M'=(x', p')$.

\noindent{\bf Lower bound on revenue loss.} Similarly to single agent
case, we can also prove a lower bound of revenue loss of any
welfare-preserving transformation for multiple agents with independent
private types. We summarize this result in
Theorem~\ref{thm:lb-revenue-loss-multiple-agents}, and show the proof
in Appendix~\ref{app:lb-revenue-loss-multiple-agents}.
\begin{theorem}\label{thm:lb-revenue-loss-multiple-agents}
For any number $n\geq 1$ of agents with independent uniform type distribution,
there exists an $\eps$-BIC/$\eps$-EEIC and IR mechanism, for which any welfare-preserving transformation must suffer at least $\Omega(\sum_i |\T_i| \eps)$ revenue loss.
\end{theorem}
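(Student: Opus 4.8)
The plan is to reduce to the single-agent lower bound (Theorem~\ref{thm:lower-bound-revenue-loss}) via a ``disjoint union'' construction, exploiting that with a product type distribution and additively separable valuations the $n$ agents are completely decoupled, so that any feasible transformation acts on each agent's \emph{interim} mechanism independently. Concretely, give each agent $i$ its own private outcome set $\O_i$ with $|\O_i|=m_i:=|\T_i|$, let $\O=\O_1\times\cdots\times\O_n$, and take valuations of the form $v_i(t_i,o)=\bar v_i(t_i,o_i)$ where $o_i$ is agent $i$'s coordinate. For each $i$, let $(\T_i,\F_i,\bar v_i,\M^{(i)})$ be a copy of the single-agent hard instance behind Theorem~\ref{thm:lower-bound-revenue-loss} on $m_i$ types; we use the mild perturbation of that instance in which every type has a \emph{unique} welfare-maximizing outcome (e.g.\ type $t_i^{(j)}$ values outcome $j$ at $V_j$ and outcome $j-1$ at $V_j-\delta$ for a small $\delta<\eps$, with $V_j=\eps+(j-1)(\eps+\delta)$, and $\M^{(i)}$ assigns outcome $j$ to type $t_i^{(j)}$ at price $V_j$; this is still $\eps$-BIC, $\eps$-EEIC, IR and welfare-maximal). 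Let $\M=\M^{(1)}\otimes\cdots\otimes\M^{(n)}$ be the product mechanism. Since each agent's payment and coordinate allocation under $\M$ depend only on its own report and its valuation depends only on its own coordinate, agent $i$'s interim and ex-post regret under $\M$ equal those under $\M^{(i)}$; hence $\M$ is $\eps$-BIC, $\eps$-EEIC and IR, and $W^{\M}=\sum_i W^{(i)}$ with $W^{(i)}=\E_{t_i}[\max_{o_i}\bar v_i(t_i,o_i)]$ the largest welfare attainable for agent $i$. The case $n=1$ is exactly Theorem~\ref{thm:lower-bound-revenue-loss}.

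Now let $\M'=(x',p')$ be any transformation of $\M$ into a BIC and IR mechanism with $W^{\M'}\ge W^{\M}$. For each $i$ define the induced single-agent interim mechanism $\bar\M_i=(\bar X_i,P'_i)$, where $\bar X_i(t_i)\in\Delta(\O_i)$ is the coordinate-$i$ marginal of $\E_{t_{-i}\sim\F_{-i}}[x'(t_i;t_{-i})]$ and $P'_i$ is the interim payment. By linearity of $\bar v_i(t_i,\cdot)$ on distributions, $W^{\M'}=\sum_i\E_{t_i}[\bar v_i(t_i,\bar X_i(t_i))]\le\sum_i W^{(i)}=W^{\M}\le W^{\M'}$, so every inequality is tight and $\bar v_i(t_i,\bar X_i(t_i))=\max_{o_i}\bar v_i(t_i,o_i)$ for all $i,t_i$; by uniqueness of the welfare-maximizer this forces $\bar X_i$ to coincide with the allocation rule of $\M^{(i)}$. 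Using separability again, agent $i$'s interim utility from truth and from every deviation under $\M'$ agrees with the corresponding quantity for $\bar\M_i$; hence BIC of $\M'$ makes $\bar\M_i$ BIC, IR of $\M'$ (averaged over $t_{-i}$) makes $\bar\M_i$ IR, and $\bar\M_i$ has the same welfare as $\M^{(i)}$. The argument behind Theorem~\ref{thm:lower-bound-revenue-loss}---which only uses that the target mechanism is welfare-preserving, BIC and IR (the allocation being pinned down, IR caps the payment of the sink type, and the $\eps$-BIC chain forces cumulative payment decreases totalling $\Omega(m_i\eps)$)---then gives that the revenue of $\bar\M_i$ is at most that of $\M^{(i)}$ minus $\Omega(m_i\eps)$. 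Finally $R^{\M'}=\sum_i\E_{t_i}[P'_i(t_i)]$ and $R^{\M}=\sum_i(\text{revenue of }\M^{(i)})$, so summing the per-agent bounds yields $R^{\M}-R^{\M'}\ge\sum_i\Omega(m_i\eps)=\Omega(\sum_i|\T_i|\eps)$.

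The main obstacle is the decoupling step: showing that welfare preservation of a possibly correlated and randomized transform $\M'$ pins down \emph{each} agent's interim allocation. This is exactly why we perturb the single-agent instance to have strict welfare-maximizers---in the original tied construction a transform could reshuffle outcomes among welfare-equivalent options, distorting the IC-relevant comparisons and recovering most of the revenue, whereas with unique maximizers no such slack exists. Two further points need care but are routine: (i) one must confirm that the proof of Theorem~\ref{thm:lower-bound-revenue-loss} never uses that its target was produced by a generic ``transformation'' (only that it is BIC, IR and welfare-preserving), so that it applies verbatim to $\bar\M_i$; and (ii) feasibility is a non-issue for a lower bound---any feasible $\M'$ automatically induces well-defined interim quantities satisfying the needed inequalities, so no appeal to Border's lemma (needed for the positive results) is required here. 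The $\eps$-EEIC case requires no change, since under the product mechanism agent $i$'s ex-post regret is independent of $t_{-i}$ and hence equals its single-agent ex-post regret, which is at most $\tfrac{(m_i-1)\eps}{m_i}<\eps$.
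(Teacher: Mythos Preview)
Your approach is essentially the same as the paper's---take a disjoint product of copies of the single-agent hard instance so that the agents decouple, and then invoke the single-agent lower bound for each---but you add one ingredient that the paper omits and that is in fact necessary: the perturbation making each type's welfare-maximizing outcome \emph{unique}.

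In the paper's construction of Theorem~\ref{thm:lower-bound-revenue-loss}, type $t^{(j)}$ values outcomes $j-1$ and $j$ equally (both at $j\eps$), so a welfare-preserving transform is free to reassign type $t^{(j)}$ to outcome $j-1$. Doing so for every $j\ge2$ (and keeping type $t^{(1)}$ on outcome~$1$) severs the IC chain: type $t^{(j+1)}$ now values type $t^{(j)}$'s allocation at~$0$, and one checks that the prices $p_1=p_2=\eps$, $p_j=j\eps$ for $j\ge3$ are BIC and IR with revenue loss only $\eps/m$, defeating the claimed $\Omega(m\eps)$ bound. Your $\delta$-perturbation removes exactly this slack, and your welfare-decoupling step (showing that equality in $W^{\M'}=\sum_i W^{(i)}$ forces each $\bar X_i$ to be the unique maximizer) is precisely what is needed to propagate uniqueness to the multi-agent setting. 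The paper's proof, by contrast, simply asserts that the interim type graphs coincide with the single-agent ones and that ``the same argument'' applies; your version makes explicit (and repairs) the step that argument tacitly relies on.
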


\subsection{Impossibility
  Results}\label{sec:impossibility-multi-agents}

In our main positive result (Theorem~\ref{thm:multi-agent-BIC}), we
assume  independent private types and the target of
transformation is BIC mechanism.  These two assumptions are
near-tight. See Appendix~\ref{app:failure-interdependent-type} and Appendix~\ref{app:failure-dsic-target} for proofs.

\begin{theorem}[Failure of interdependent type]\label{thm:failure-interdependent-type}
There exists an $\eps$-BIC mechanism $\M$ w.r.t an interdependent type distribution $\F$ (see Definition~\ref{def:interdependent-type} in Appendix~\ref{app:omitted-definitions}), such that no BIC mechanism over $\F$ can achieve negligible revenue  loss compared with $\M$.
\end{theorem}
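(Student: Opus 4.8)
The plan is to exhibit a two–agent instance with correlated types (an interdependent type distribution in the sense of Definition~\ref{def:interdependent-type}) together with an $\eps$-BIC and IR mechanism $\M$ whose revenue $R^{\M}(\F)$ is bounded away, by a constant independent of $\eps$, from the revenue of \emph{every} BIC and IR mechanism for the same $\F$. The instance should be designed so that the socially efficient allocation is ``cross-checkable'': each agent's efficient allocation depends on the other agent's reported type, and the correlation in $\F$ lets the mechanism use one agent's report as a noisy certificate of the other's type. The mechanism $\M$ then runs the efficient allocation and adds side payments that penalize disagreement between the two reports; by tuning the magnitude of the penalty against the strength of the correlation, every misreporting gain can be driven below $\eps$ while $\M$'s revenue stays within $O(\eps)$ of the full social surplus. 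This establishes that $R^{\M}(\F)$ is essentially the whole surplus.

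The core of the argument is the matching upper bound: every BIC (and IR) mechanism on $\F$ has revenue at most (full surplus) $-\,c$ for a fixed $c>0$. I would prove this by writing the BIC-revenue question as a linear program over interim allocation rules and state-dependent payment rules and chaining the exact incentive constraints together with the IR constraints. The crucial difference from the independent case is that the interim operators $\E_{t_{-i}\sim \F_{-i}\mid t_i}[\,\cdot\,]$ appearing in agent $i$'s IC constraints now vary across agent $i$'s types through likelihood-ratio factors; one can choose $\F$ so that a short cycle of IC constraints --- one for agent $1$ at some type, one for agent $2$ at some type, closing back up --- forces each agent's interim utility at some type to be at least a fixed positive amount once IR pins down another type's payment. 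Summing these forced rents over the (constantly many) types yields the constant revenue deficit. This is exactly what fails to happen under the hypotheses of Theorem~\ref{thm:multi-agent-BIC}: with independence the interim operator does not depend on one's own type, Border's lemma applies, and the $\eps$-relaxation loosens the revenue bound only by $O(\sum_i|\T_i|\eps)$.

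The step I expect to be the main obstacle is making the gap genuinely a constant rather than $O(\eps)$: relaxing the BIC constraints by $\eps$ changes the LP value by only $O(\eps)$ times a condition number of the fixed instance. To obtain an $\eps$-independent gap one must either (i) let the instance depend on $\eps$ so that the relevant likelihood ratios grow like $1/\eps$ as $\eps\to0$, so that the same $\eps$ of slack is amplified to $\Theta(1)$ when it propagates around the IC cycle; or (ii) choose $\F$ so that the exact-BIC feasible region is degenerate --- it forces the mechanism to effectively ignore one of the two reports, because an exact cross-agent consistency requirement has no solution --- while for every $\eps>0$ the $\eps$-BIC region still admits the cross-checking mechanism. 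I would pursue option (i): scale the correlation strength with $\eps$, verify by direct computation that $\M$ remains $\eps$-BIC, IR, and near-surplus-extracting, and check that the chained IC/IR inequalities still leave a fixed rent at each type because the amplification exactly offsets the vanishing correlation. A secondary point is to confirm the impossibility is robust to whether the target mechanism is required to be interim-IR or ex-post-IR, and to note that allowing the BIC mechanism to change (and thereby lose) welfare only lowers its revenue, so it cannot close the gap.
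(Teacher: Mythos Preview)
Your high-level strategy --- construct an $\eps$-BIC mechanism whose revenue is essentially the full surplus, then show every exact BIC mechanism leaves a constant rent --- is exactly the paper's strategy, and you correctly identify that the instance must depend on $\eps$ (your option (i)). But your execution diverges from the paper's in ways that make your plan both harder and riskier.

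The paper's construction is far simpler than your Cr\'emer--McLean-style cross-checking. There are two items $A,B$ and two unit-demand agents; agent~1 alone observes which item is ``better,'' values the better item at $1+\eps$ and the other at $1$; agent~2 values the better item at $2$ and the other at $0$. The mechanism simply asks agent~1 which is better, gives that item to agent~2 for price $2$, and gives the remaining item to agent~1 for price $1$. Revenue and welfare are both $3$. Agent~1's only deviation gains exactly $\eps$ (she gets the better item at the same price), so the mechanism is $\eps$-BIC with no side-payments or penalties at all --- the $\eps$ slack comes directly from agent~1's near-indifference, which is where the dependence on $\eps$ enters the instance. For the upper bound the paper does not chain an LP; it applies weak monotonicity to agent~1's allocation and a two-case analysis shows any exact BIC mechanism has revenue (and welfare) at most $2.5+\eps/2$, giving a gap of $0.5-\eps/2$.

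Your cross-checking plan has a structural pitfall you should be aware of: if both agents hold informative, correlated signals about a private-values environment, Cr\'emer--McLean typically lets an \emph{exact} BIC mechanism extract full surplus, which would destroy your constant-deficit claim rather than establish it. The paper avoids this by using genuinely interdependent \emph{values} with one-sided information (agent~2 has no signal at all, so there is nothing to cross-check against), and the deficit then comes from weak monotonicity forcing the informed agent to receive the better item with nontrivial probability when she claims it is better. If you pursue your route, you must ensure the interdependence is of the value kind and that the information structure does not admit Cr\'emer--McLean extraction; at that point you will likely find the simple one-informed-agent construction is the cleanest way through.
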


Theorem~\ref{thm:failure-interdependent-type} provides a
counterexample to show that if we allow for interdependent
types,  where the value of one agent depends on the type of another,
there is no way to construct a BIC mechanism without
negligible revenue loss compared with the original $\eps$-BIC
mechanism even if we remove the requirement of welfare preservation.
This leaves an
open question is whether there is a counterexample for an 
$\eps$-BIC transform for correlated, private types.
\begin{theorem}[Failure of DSIC target]\label{thm:failure-dsic-target}
There exists an $\eps$-BIC mechanism $\M$ defined on a type distribution $\F$, such that no DSIC mechanism over $\F$ can achieve negligible revenue loss compared with $\M$.
\end{theorem}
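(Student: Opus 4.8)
The plan is to prove this impossibility by exhibiting a single fixed counterexample: a type distribution $\F$ together with a mechanism $\M$ that is \emph{exactly} BIC (hence $\eps$-BIC for every $\eps \geq 0$) and IR, such that $R^{\M}(\F)$ strictly exceeds the supremum of $R^{\M'}(\F)$ over \emph{all} DSIC and IR mechanisms $\M'$ on $\F$ by an absolute constant $c>0$ that does not depend on $\eps$. Natural candidates are small (constant-size) two-agent instances that separate Bayesian from dominant-strategy implementation: a public-good instance (where the efficient rule is Bayesian-implementable with nonnegative budget via expected-externality payments but its Groves/VCG implementation runs a deficit), a bilateral-trade ``broker'' instance in the spirit of Myerson--Satterthwaite, or a correlated two-bidder single-item auction, or a two-bidder multi-item auction with independent types, where the Bayesian-optimal auction extracts revenue that no ex-post mechanism can match. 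In each case one takes $\M$ to be (close to) the Bayesian-optimal mechanism on that instance, which is automatically $0$-BIC.

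The steps, in order, are: (i) fix the instance $\F$ and the mechanism $\M$ on a constant number of types and outcomes, and verify directly that $\M$ is BIC and IR and compute $R^{\M}(\F)$; (ii) prove the key lemma that \emph{every} DSIC and IR mechanism $\M'$ on $\F$ satisfies $R^{\M'}(\F) \le R^{\M}(\F) - c$; and (iii) conclude. For step (iii): since a transformation ``with negligible revenue loss'' comes with a fixed function $r$ satisfying $r(\eps)\to 0$ as $\eps\to 0$, for every sufficiently small $\eps$ we have $r(\eps)<c$, and then the $\eps$-BIC mechanism $\M$ admits no DSIC mechanism $\M'$ with $R^{\M'}(\F)\ge R^{\M}(\F)-r(\eps)$ --- so a fortiori no welfare-preserving, IR-preserving such transform exists. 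Step (ii) is carried out by writing the DSIC revenue as a linear functional of the allocation rule, using Myerson's payment characterization of DSIC (the per-agent payments are pinned down by the allocation up to an additive term that depends only on the other agents' reports, with IR fixing it at the agent's lowest type), and then maximizing over feasible allocations --- a finite linear program whose optimum we bound away from $R^{\M}(\F)$ either by exhibiting a dual certificate or by a direct exchange argument.

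The main obstacle is step (ii): the instance must be engineered so that the DSIC revenue ceiling is \emph{provably} strictly below $R^{\M}(\F)$. This is delicate because several ``obvious'' attempts fail --- for example, on a correlated private-value instance satisfying the Cr\'emer--McLean full-rank condition, a Groves mechanism with a well-chosen additive term $h_i(t_{-i})$ can, together with interim IR, replicate Bayesian full-surplus extraction, so such instances exhibit \emph{no} revenue gap. The counterexample therefore has to exploit a feature that dominant-strategy IC genuinely cannot use: either the relevant (efficient or near-optimal) allocation is not ex-post implementable at all in the chosen environment (as in the public-good deficit or a two-sided market), or the per-realization DSIC constraints --- which must hold even on type profiles outside $\mathrm{supp}(\F)$ --- combined with IR at the worst type leave a hard residual of agent surplus that no DSIC mechanism can extract. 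Once a suitable instance of this kind is pinned down, the remaining work in steps (i) and (ii) is a bounded linear-algebra / linear-programming computation.
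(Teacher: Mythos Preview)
Your high-level strategy is correct: exhibit a fixed instance with a constant BIC-versus-DSIC revenue gap $c>0$, so that no function $r(\eps)\to 0$ can absorb it. But what you have written is a plan, not a proof. Step~(ii) --- showing that \emph{every} DSIC mechanism on the chosen instance has revenue at most $R^{\M}(\F)-c$ --- is the entire content of the theorem, and you leave it as ``a bounded linear-algebra / linear-programming computation'' without fixing an instance or carrying out the bound. Several of your candidate instances are also unlikely to work as stated: the public-good ``deficit'' and Myerson--Satterthwaite impossibilities are about budget balance and efficient trade, not about a seller's revenue ceiling under DSIC, so translating them into a clean revenue gap would require additional argument. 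The one candidate that does work is precisely the one the paper uses.

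The paper's proof is short because it outsources step~(ii) to a known result: it takes the two-agent, two-item additive setting with each $t_{ij}$ i.i.d.\ uniform on $\{1,2\}$, writes down an explicit $\eps$-BIC mechanism with revenue $3.1875+\eps/16$, and then cites \citet{Yao17}, who characterizes the optimal DSIC mechanism in this instance and shows its revenue is exactly $3.125$. The gap $c=0.0625$ is an absolute constant, and the proof is done. This is exactly your ``two-bidder multi-item auction with independent types'' candidate, so you had identified the right environment; what is missing from your proposal is either the citation to Yao (or an equivalent result) or an actual derivation of the DSIC upper bound, which is nontrivial --- Yao's characterization is a paper-length argument, not a routine LP check. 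If you intend a self-contained proof, you must commit to a specific instance and actually execute step~(ii).
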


Theorem~\ref{thm:failure-dsic-target} gives an impossibility result
for the setting that we start from an $\eps$-BIC mechanism. %
We leave open the question as to whether it is possible to transform an $\eps$-EEIC mechanism to a DSIC mechanism with zero loss of social welfare and negligible loss of revenue, for multiple agents with independent uniform type distribution.

\vspace{-10pt}
\section{Application to Automated Mechanism Design}\label{sec:application}

In this section, we apply the transform to linear-programming based and machine-learning based approaches to automated mechanism design  (AMD)~\citep{ConitzerS02}, where the mechanism is automatically created for the setting and objective at hand.

\dcpadd{We state the main results for the following,  blended design objective of revenue and welfare}, for a given $\lambda \in [0, 1]$ and type distribution $\F$,
\begin{eqnarray}\label{eq:learning-target-amd}
\mu_\lambda(\M, \F) = (1-\lambda)R^\M(\F) + \lambda W^\M(\F).
\end{eqnarray}

Let $\mathtt{OPT}=\max_{\M: \M \text{ is BIC and IR}} \mu_\lambda(\M, \F)$ be the optimal objective achieved by a BIC and IR mechanism defined on $\F$. We consider two different AMD approaches, an LP-based approach and a machine-learning based
approach.
\smallskip

\noindent\textbf{LP-based AMD.}
\dcpadd{As explained in more detail in Appendix~\ref{app:omitted-application}, an
LP-based approach to BIC mechanism design introduces a decision variable for each outcome and each type profile. }
In practice, the type space of each agent
may be exponential in the number of items for multi-item auctions,
\dcpadd{and the number of type profiles is exponential in the number of agents}. 
To address this challenge,
\dcpadd{it is necessary to} discretize $\T_i$ to a coarser space $\T^+_i, (\vert \T^+_i\vert \ll \vert\T_i\vert)$ and construct the coupled type distribution $\F^+_i$. (e.g., by rounding down to the nearest points in $\T^+_i$, that is, the mass of each point in $T_i$ is associated with the nearest point in $\T_i^+$.)
Then we can apply an LP-based AMD approach for type distribution $\F^+ = (\F^+_1,\cdots,\F^+_n)$.
Even though the LP returns an mechanism defined only on $\T^+$, the mechanism $\M$ can be defined on $\T$, by the same coupling technique. For example, given any type profile $t\in \T$, there is a coupled $t^+\in \T^+$, and the mechanism $\M$ takes $t^+$ as the input. %
This coupling technique makes the mechanism only approximately IC.
Suppose, in particular, that we have an $\alpha$-approximation LP algorithm that  outputs
an $\eps$-BIC and IR mechanism $\M$ over $\F$,
such that $\mu_\lambda(\M, \F)\geq \alpha \mathtt{OPT}$.
By an application of the transform to $\M$, we have the following theorem.
\begin{theorem}[LP-based AMD]\label{thm:application-LP}
  For $n$ agents with independent type distribution $\times_{i=1}^n \F_i$, and an LP-based AMD approach for
  coarsened distribution $\F^+$ on coarsened type space $\T^+$
  that gives an $\eps$-BIC and IR mechanism $\M$ on $\F$,
  with $(1-\lambda) R + \lambda W \geq \alpha \mathtt{OPT}$, for some $\lambda \in[0,1]$, and some $\alpha\in (0,1)$,
then 
  there exists a BIC and IR mechanism $\M'$ such that
\begin{eqnarray*}
\mu_\lambda(\M', \F) \geq \alpha \mathtt{OPT} - (1-\lambda)\sum_{i=1}^n \vert\T_i\vert \eps.
\end{eqnarray*}

Given oracle access to the interim quantities of $\M$ on $\F$ and an $\alpha$-approximation LP solver with running time $rt_{LP}(x)$, where $x$ is the bit complexity of the input, the running time to output the mechanism $\M'$ is at most $\mathtt{poly}(\sum_i |\T_i|, rt_{LP}(\mathtt{poly}(\sum_i |\T^+_i|, \frac{1}{\eps}))$.
\end{theorem}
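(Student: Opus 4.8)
The plan is to reduce Theorem~\ref{thm:application-LP} directly to Main Result~1 (Theorem~\ref{thm:multi-agent-BIC}) by tracking how the objective $\mu_\lambda$ changes under the transform. First I would observe that, by hypothesis, the LP-based AMD procedure returns a mechanism $\M = (x,p)$ that is $\eps$-BIC and IR on the true distribution $\F$, with $\mu_\lambda(\M, \F) = (1-\lambda)R + \lambda W \geq \alpha\,\mathtt{OPT}$, where $R = R^\M(\F)$ and $W = W^\M(\F)$ are the revenue and welfare under truthful reports. The approximate incentive compatibility on $\F$ is exactly the regularity needed to invoke the transform; the coarsening to $\T^+$ is only the mechanism by which $\eps$-BIC arises, and once we have an $\eps$-BIC mechanism on $\F$ its provenance is irrelevant.

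Next I would apply Theorem~\ref{thm:multi-agent-BIC} to $\M$. This yields a BIC and IR mechanism $\M' = (x', p')$ with $W^{\M'}(\F) \geq W$ and $R^{\M'}(\F) \geq R - \sum_{i=1}^n |\T_i|\eps$. Plugging these two inequalities into the definition of $\mu_\lambda$ and using $\lambda, (1-\lambda) \geq 0$:
\begin{eqnarray*}
\mu_\lambda(\M', \F) &=& (1-\lambda) R^{\M'}(\F) + \lambda W^{\M'}(\F) \\
&\geq& (1-\lambda)\bigl(R - \textstyle\sum_{i=1}^n |\T_i|\eps\bigr) + \lambda W \\
&=& \mu_\lambda(\M, \F) - (1-\lambda)\textstyle\sum_{i=1}^n |\T_i|\eps \\
&\geq& \alpha\,\mathtt{OPT} - (1-\lambda)\textstyle\sum_{i=1}^n |\T_i|\eps,
\end{eqnarray*}
which is the claimed bound. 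Note the welfare term only helps (the loss is entirely on the revenue side, scaled by $1-\lambda$), which is the reason the additive penalty carries the factor $(1-\lambda)$ rather than appearing unscaled.

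For the running-time claim I would compose the two stages. The LP solver is run on the coarsened instance: the coarsened type space has size $\sum_i |\T^+_i|$, and building the LP with a variable per (outcome, coarsened-type-profile) pair plus the discretization of probabilities at granularity $1/\eps$ gives an input of bit complexity $\mathtt{poly}(\sum_i |\T^+_i|, 1/\eps)$, so this stage costs $rt_{LP}(\mathtt{poly}(\sum_i |\T^+_i|, \tfrac{1}{\eps}))$. Given the returned $\M$ together with oracle access to its interim quantities on $\F$ (obtained via the coupling $t \mapsto t^+$), Theorem~\ref{thm:multi-agent-BIC} produces $\M'$ in time $\mathtt{poly}(\sum_i |\T_i|)$. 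Summing, the total running time is $\mathtt{poly}\bigl(\sum_i |\T_i|,\, rt_{LP}(\mathtt{poly}(\sum_i |\T^+_i|, \tfrac{1}{\eps}))\bigr)$, matching the statement.

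The only genuinely non-trivial point — and the step I would be most careful about — is ensuring that the object handed to the transform really is an $\eps$-BIC (or $\eps$-EEIC) mechanism \emph{on the original $\F$}, not merely on $\F^+$; this is where the coupling argument in Appendix~\ref{app:omitted-application} does its work, bounding the incentive slack introduced by evaluating $\M$ at the rounded type $t^+$ in terms of the discretization granularity, which is then absorbed into $\eps$. Everything else is bookkeeping: the welfare/revenue substitution is a one-line monotonicity argument, and the allocation-invariance and IR properties are inherited verbatim from Theorem~\ref{thm:multi-agent-BIC} without any additional argument.
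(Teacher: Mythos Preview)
Your proposal is correct and follows essentially the same approach as the paper: the paper states Theorem~\ref{thm:application-LP} as a direct application of Theorem~\ref{thm:multi-agent-BIC} to the $\eps$-BIC mechanism output by the LP, with the blended-objective bound obtained by combining the welfare-preservation and revenue-loss guarantees exactly as you do. The paper does not spell out the one-line $\mu_\lambda$ calculation or the running-time composition explicitly, so your writeup is in fact more detailed than the paper's own treatment.
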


\smallskip

\noindent\textbf{Machine-learning based AMD.} RegretNet uses an artificial neural network to learn approximately-incentive compatible auctions for multi-dimensional mechanism design~\citep{Duetting19}. See Appendix~\ref{app:omitted-application} for more details of the application of RegretNet to a setting in which the design goal is a blend of revenue and welfare.
RegretNet outputs an $\eps$-EEIC mechanism.
Suppose that  RegretNet is used in a setting with an independent, uniform type distribution $\F$. To train RegretNet, we randomly draw $S$ samples from $\F$ to form a training data $\mathcal{S}$ and train the model on $\mathcal{S}$. %
Let $\mathcal{H}$ be the function space modeled by RegretNet
and suppose a PAC-learner that outputs an $\eps$-EEIC mechanism $\M\in \mathcal{H}$ on $\F$, such that $\mu_\lambda(\M, \F) \geq \sup_{\hat{\M}\in \mathcal{H}}\mu_\lambda(\hat{\M}, \F) - \eps$ holds with probability at least $1-\delta$, by observing $S=S(\eps, \delta)$ i.i.d\ samples from $\F$.
By an application of the transform to $\M$, we have the following theorem.
\begin{theorem}[RegretNet AMD]\label{thm:application-regretnet}
For $n$ agents with independent uniform type distribution $\times_{i=1}^n \F_i$ over $\T=(\T_1,\cdots,\T_n)$, and RegretNet to generate an  $\eps$-EEIC and IR mechanism $\M$ on $\F$ with $\mu_\lambda(\M, \F) \geq \sup_{\hat{\M}\in \mathcal{H}}\mu_\lambda(\hat{\M}, \F) - \eps$ holds with probability at least $1-\delta$, for some $\lambda\in [0,1]$, trained on $S=S(\eps, \delta)$ i.i.d\ samples from $\F$, where $\mathcal{H}$ is the function class modeled by RegretNet, then there exists a BIC and IR mechanism $\M'$, with probability at least $1-\delta$, such that
\begin{eqnarray*}
\mu_\lambda(\M', \F) \geq \sup_{\hat{\M}\in \mathcal{H}}\mu_\lambda(\hat{\M}, \F)  - (1-\lambda)\sum_{i=1}^n \vert \T_i\vert \eps - \eps.
\end{eqnarray*}

Given oracle access to the interim quantities of $\M$ on $\F$ and a PAC-learner with running time $rt_{RegretNet}(x)$, where $x$ is the bit complexity of the input, the running time to output the mechanism $\M'$ is at most $\mathtt{poly}(\sum_i |\T_i|, \eps, rt_{RegretNet}(\mathtt{poly}(S, \frac{1}{\eps}))$.
\end{theorem}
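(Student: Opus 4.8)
The plan is to derive Theorem~\ref{thm:application-regretnet} as a more-or-less direct corollary of the main transformation result (Theorem~\ref{thm:multi-agent-BIC}) combined with the PAC-learning guarantee of RegretNet, in a way that exactly parallels the LP-based statement in Theorem~\ref{thm:application-LP}. First I would fix the setting: $n$ agents, independent uniform type distributions $\times_{i=1}^n \F_i$, and the blended objective $\mu_\lambda(\M, \F) = (1-\lambda) R^\M(\F) + \lambda W^\M(\F)$. By hypothesis, training RegretNet on $S = S(\eps,\delta)$ i.i.d.\ samples yields, with probability at least $1-\delta$, an $\eps$-EEIC and IR mechanism $\M \in \mathcal{H}$ with $\mu_\lambda(\M, \F) \geq \sup_{\hat{\M}\in \mathcal{H}} \mu_\lambda(\hat{\M}, \F) - \eps$. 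Everything below is conditioned on this good event.

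Next I would apply Theorem~\ref{thm:multi-agent-BIC}. Because the type distribution is independent and uniform per agent, that theorem applies to $\eps$-EEIC mechanisms, so it produces a BIC and IR mechanism $\M'$ with $W^{\M'}(\F) \geq W^{\M}(\F)$ and $R^{\M'}(\F) \geq R^{\M}(\F) - \sum_{i=1}^n |\T_i|\eps$. Now I would just plug these two inequalities into the blended objective: since $\lambda \geq 0$ and $1-\lambda \geq 0$,
\begin{align*}
\mu_\lambda(\M', \F) &= (1-\lambda) R^{\M'}(\F) + \lambda W^{\M'}(\F) \\
&\geq (1-\lambda)\Big(R^{\M}(\F) - \sum_{i=1}^n |\T_i|\eps\Big) + \lambda W^{\M}(\F) \\
&= \mu_\lambda(\M, \F) - (1-\lambda)\sum_{i=1}^n |\T_i|\eps \\
&\geq \sup_{\hat{\M}\in \mathcal{H}} \mu_\lambda(\hat{\M}, \F) - (1-\lambda)\sum_{i=1}^n |\T_i|\eps - \eps,
\end{align*}
which is exactly the claimed bound, holding with probability at least $1-\delta$.

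For the running-time claim, I would note that building the interim type graph of each agent $i$ takes $\mathtt{poly}(|\T_i|)$ time given oracle access to the interim quantities of $\M$, and then Steps 1 and 2 of the transformation run in $\mathtt{poly}(\sum_i |\T_i|)$ total time by Theorem~\ref{thm:multi-agent-BIC}. The only extra cost is obtaining $\M$ itself from RegretNet, which costs $rt_{RegretNet}$ on an input whose bit complexity is governed by the sample size $S = S(\eps,\delta)$ and the target accuracy, giving the stated $\mathtt{poly}(\sum_i |\T_i|, \eps, rt_{RegretNet}(\mathtt{poly}(S, \tfrac{1}{\eps})))$ bound. I would defer the precise specification of RegretNet's function class $\mathcal{H}$, the sample-complexity function $S(\eps,\delta)$, and the reduction of the blended objective to RegretNet's loss to Appendix~\ref{app:omitted-application}, since those are not needed for the inequality itself.

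The only genuine subtlety—and the one place I would be careful—is making sure the hypotheses of Theorem~\ref{thm:multi-agent-BIC} are actually met: that theorem only handles $\eps$-EEIC inputs under \emph{independent uniform} type distributions (the non-uniform case is ruled out by the impossibility result Theorem~\ref{thm:impossibility-EEIC}), so the uniformity assumption in the statement is essential and must be invoked explicitly. The rest is a one-line monotonicity argument on $\mu_\lambda$, so I do not expect any real obstacle; the "hard work" was already done in establishing Theorem~\ref{thm:multi-agent-BIC} and in the RegretNet generalization analysis, both of which this theorem simply composes.
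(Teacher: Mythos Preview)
Your proposal is correct and follows exactly the route the paper intends: the paper itself presents Theorem~\ref{thm:application-regretnet} as a direct application of the transformation in Theorem~\ref{thm:multi-agent-BIC} to the $\eps$-EEIC mechanism output by RegretNet, and your derivation of the $\mu_\lambda$ bound by combining the welfare and revenue guarantees is precisely that one-line monotonicity argument. Your emphasis on the uniform-distribution hypothesis being essential (via Theorem~\ref{thm:impossibility-EEIC}) is also the right observation.
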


\section{Conclusion}\label{sec:conclusion}

In this paper, we have proposed \dcpadd{the first} $\eps$-BIC to BIC
transformation that achieves negligible revenue loss with no loss in
social welfare.
\newnew{Our transformation differs from the previous replica-surrogate
  matching approaches because we would like to preserve welfare.
  In its place, we directly make use of a directed and weighted type graph (induced by the types' regret), one for each agent. The transformation runs a \emph{fractional rotation step} and a \emph{payment reducing step} iteratively to make the mechanism Bayesian incentive compatible.}
We also proved that the revenue loss bound
of $\sum_{i}|\T_i|\eps$ 
is tight given the requirement that the transform should maintain social welfare. Our transformation also satisfies (ex ante) allocation-invariance property, which cannot be attained by the previous replica-surrogate matching.
In addition, we have demonstrated that the transformation can be
applied to an $\eps$-EEIC mechanism with multiple agents in the case
that each agent has a independent uniform type distribution, and
provided an impossibility result for the case of a non-uniform
distribution and just one agent.

There remain some interesting open questions:
\begin{itemize}
\item Can we design a polynomial time algorithm for an $\eps$-BIC to BIC transformation with negligible revenue loss and without loss of welfare given only query access to the original mechanism and sample access to type distribution?  (Our polynomial time results assume oracle access to the interim quantities.)
\item Is it possible to transform an $\eps$-EEIC mechanism to a DSIC
  mechanism, for multiple agents and with an independent, uniform type distribution, without loss of welfare, and with only negligible revenue loss?
\item If we only focus on the revenue perspective, is it possible to
find an $\eps$-EEIC to DSIC transformation, perhaps even in the
non-uniform case?

\item \newnew{Theorem~\ref{thm:failure-interdependent-type} gives an
    impossibility result for the setting with interdependent type
    distribution. Is it possible to extend our transformation to the
    correlated type distribution setting, or prove an impossibility
    result there?} 
\end{itemize}

\bibliographystyle{ACM-Reference-Format}
\bibliography{approx-ic}

\clearpage
\appendix
\section*{Appendix}

\section{Details of Replica-Surrogate Mechanism}\label{app:replica-surrogate-mechanism}
We show the detailed description of Replica-Surrogate Mechanism in Fig.~\ref{transform:replica-surrogate-matching}.

\begin{figure*}[h]
\fcolorbox{black}{gray!50!white}{
\parbox{0.95\textwidth}{
{\bf Phase 1: Surrogate Sale. } For each agent $i$,

\begin{itemize}
\item Modify mechanism $\M$ to multiply all prices it charges by a factor of $(1-\eta)$. Let $\M'$ be the mechanism resulting from this modification.

\item Given the reported type $t_i$, create $r-1$ replicas sampled i.i.d from $\F_i$ and $r$ surrogates sampled i.i.d from $\F_i$. $r$ is the parameter of the algorithm to be decided later.

\item Construct a weighted bipartite graph between replicas (including agent $i$'s true type $t_i$) and surrogates. The weight of the edge between a replica $r^{(j)}$ and a surrogate $s^{(k)}$ is the interim utility of agent $i$ when he misreports type $s^{(k)}$ rather than the true type $r^{(j)}$ in mechanism $\M'$, i.e.,
\begin{align*}
w_i(r^{(j)}, s^{(k)}) %
= E_{t_{-i}\in \F_{-i}}\left[v_i(r^{(j)}, x(s^{(k)}, t_{-i}))\right] - (1-\eta)\cdot \E_{t_{-i}\in \F_{-i}}\left[p_i(s^{(k)}, t_{-i})\right]
\end{align*}

\item Let $w_i((r^{(j)}, s^{(k)}))$ be the value of replica $r^{(j)}$ for being matched to surrogate $s^{(k)}$. Compute the VCG matching and prices, that is, compute the maximum weighted matching w.r.t $w_i(\cdot, \cdot)$ and the corresponding VCG payments. If a replica is unmatched in the VCG matching, match it to a random unmatched surrogate.
\end{itemize}

\noindent{\bf Phase 2: Surrogate Competition.}
\begin{itemize}
\item Let $\vec s_i$ denote the surrogate chosen to represent agent $i$ in phase 1, and let $\vec s$ be the entire surrogate profile. We let the surrogates $\vec s$ play $\M'$.

\item If agent $i$'s true type $t_i$ is matched to a surrogate through VCG matching, charge agent $i$ the VCG price that he wins the surrogate and award (allocate) agent $i$, $x_i(s)$ (Note $\M'$ also charges agent $i$, $(1-\eta)p_i(s)$). If agent $i$'s true type is not matched in VCG matching and matched to a random surrogate, the agent gets nothing and pays $0$. 
\end{itemize}
}}
\caption{Replica-Surrogate Matching Mechanism.}
\label{transform:replica-surrogate-matching}
\end{figure*}

\section{Omitted Definitions}\label{app:omitted-definitions}

\begin{definition}[Individual Rationality]\label{def:ex-interim-IR}
	A BIC/$\eps$-BIC mechanism $\M$ satisfies interim individual rationality (interim IR) iff for all $i, v_i$:
	\begin{eqnarray*}
		\E_{t_{-i}\sim \F_{-i}} [u_i(t_i, \M(t))] \geq 0
	\end{eqnarray*}
	This becomes ex-post individual rationality (ex-post IR) iff for all $i, t_i, t_{-i}, u_i(t_i, \M(t)) \geq 0$ with probability 1, over the randomness of the mechanism.
\end{definition}

\begin{definition}[Interdependent private type]\label{def:interdependent-type}
	Each agent $i\in [n]$ has a private signal $s_i$, which captures her private information and the type of every agent $t_i$ depends on the entire signal profile, $s = (s_1, \cdots, s_n)$.
\end{definition}

\section{Omitted Proofs}\label{app:omitted-proofs}
\subsection{Proof of Claim~\ref{claim:single-agent-uniform-claim-1}}\label{app:single-agent-uniform-claim-1}
\begin{proof}
First, in Step 1, since we only rotate the allocation and payment of nodes in $\C$, the total weight of the edges from nodes in $\T\backslash \C$ to nodes in $\C$ remains the same. Second, each node in $\C$ achieves a utility no worse than before, so that the weight of each outgoing edge from nodes in $\C$ to nodes in $\T\backslash\C$ will not increase. Third, since $\C$ is the shortest cycle, there are no other edges among nodes in $\C$ in addition to edges in $\C$, which implies we cannot create new edges among nodes in $\C$ by this rotation. It follows that this rotation
decreases the total weights of graph $G$ by the weights of $\C$. Finally, the expected revenue achieved by types $t^{(1)},\cdots,t^{(l)}$ is still the same, since Step 1 only rotates the allocation and payment rules, and the probability of each type is the same. Combining the fact that each node gets a weakly preferred outcome, the social welfare does not decrease.	
\end{proof}

\subsection{Proof of Claim~\ref{claim:single-agent-uniform-claim-2}}\label{app:single-agent-uniform-claim-2}
\begin{proof}
In Step 2, we first prove that it can only create new edges with zero weight. A new edge created by Step 2 can only point to a node $\bar{t}\in S_t$. We show by contradiction, suppose we create a positive weight edge from $\hat{t}$ to $\bar{t}\in S_t$, then $u(\hat{t}, \M'(\bar{t})) - u(\hat{t}, \M'(\hat{t})) > 0$ for the current updated mechanism $\M'$, we have 
\begin{align*}
u(\hat{t}, \M'(\hat{t})) &< u(\hat{t}, \M'(\bar{t})) < u(\hat{t}, \M'(\bar{t})) + \eps_t \\
&\leq u(\hat{t}, \M'(\bar{t})) + u(t'', \M'(t'')) - u(t'', \M'(\bar{t}))\\
&\leq  u(\hat{t}, \M'(\bar{t})) + u(\hat{t}, \M'(\hat{t})) - u(\hat{t}, \M'(\bar{t})) & (\text{By definition of } t'')\\
& = u(\hat{t}, \M'(\hat{t})),
\end{align*}
which proves our claim. Second, it is straightforward to verify that Step 2 doesn't decrease social welfare since we only decrease payment in Step 2. Finally, in Step 2, we reduce the weight of every positive-weight outgoing edge associated with $t$ by $\min\{\eps_t, \bar{\eps}_t\}$. This is because for any node $t'$, s.t. there is a positive-weight edge between $t$ and $t'$, $t'$ cannot be the ancestor of $t$, otherwise, there is already a cycle, which contradicts Step 1.
\end{proof}

\subsection{Proof of Theorem~\ref{thm:vanish-ic-outcome}}\label{app:vanish-ic-outcome}
\begin{proof}
	We construct the same weighted directed graph $G=(\T, E)$ as in the proof of  Theorem~\ref{thm:single-agent-uniform}. Again, the target is to reduce the total weight of $G$ to zero, which leads to a BIC mechanism. We denote $M_e$ as the menus and $|M_e| = C$, and we have for each type $t^{(i)}$, that there exists a menu $m_e \in M_e$, s.t. $\M(t^{(i)}) = m_e$. If $t^{(i)}$ and $t^{(j)}$ share a same menu, i.e., $\M(t^{(i)}) = \M(t^{(j)})$, there is an directed edge with weight zero from $t^{(i)}$ to $t^{(j)}$, and vice versa. We denote the distribution of each menu $m_e$ as,
	\begin{eqnarray*}
		g(m_e) = \sum_{t\in T: \M^\eps(t)=m_e}f(t).
	\end{eqnarray*}

	Since $\M$ is $\eps$-BIC, the weight of each edge is bounded by $\eps$. We still apply Step 1 and Step 2 in graph $G$ proposed in Theorem~\ref{thm:single-agent-uniform}, however, we count the revenue loss over menu space. 
	
	First, in Step 1, we only rotate the allocation and payment (menu) along the cycle, it will not change the allocation and payment of each menu. In addition, it will not the distribution of menus, $g(m_e)$ is preserved for each $m_e$. %
	
	In Step 2, consider a source node $t$, and let the corresponding menu be $m'_e$ (the output of the current mechanism with type $t$). Every type with $m'_e$
	is the ancestor of type $t$, when we decrease the payment of type $t$ by $\min\{\eps_t, \bar{\eps}_t\}$, the payment for each type $t'$ associated with menu $m'_e$ will be decreased by the same amount. 
	If there is a type $t''$ with a different menu $m''_e \neq m'_e$ and $t''$ is an ancestor of $t$, then all the types associated with menu $m''_e$ are the ancestors of $t$. Thus, in Step 2, the payment of the types with the same menu must be decreased by the same amount. Therefore, Step 2 only changes the payment of each menu by the same amount, and does not  change the distribution of each menu, i.e. $g(m_e)$ is the same for each $m_e\in M_e$.
	
	Moreover,  if there is an edge $(t^{(j)}, t^{(k)})$ with positive weight and if $t^{(j)}$ and $t$ share the same menu, then (1) $t^{(k)}$ must be in different menus, and (2) $t^{(k)}$ is not the ancestor of $t$, otherwise, there exists a cycle, which contains a positive-weight edge.  Therefore, in Step 2, if we decrease the payment of type $t$ by $\min\{\eps_t, \overline{\eps}_t\}$, we also reduce the weight of edge  $(t^{(j)}, t^{(k)})$ by $\min\{\eps_t, \overline{\eps}_t\}$. In other words, we reduce the regret of all the nodes in menu $m_e$ by $\min\{\eps_t, \overline{\eps}_t\}$.

	Since the weight of each edge is bounded by $\eps$, then we may decrease the expected payment at most $\eps$ to reduce all the regret of the nodes belonging to menu $m_e$. In total, the revenue loss is bounded by $C\eps$.
\end{proof}

\subsection{Proof of Theorem~\ref{thm:single-agent-general}}\label{app:single-agent-general}
\begin{proof}
We construct a weighted directed graph $G = (\T, E)$, different with the one in Theorem~\ref{thm:single-agent-uniform}. A directed edge $e = (t^{(j)}, t^{(k)})\in E$ is drawn from $t^{(j)}$ to $t^{(k)}$ when the outcome (allocation and payment) of $t^{(k)}$ is weakly preferred by true type $t^{(j)}$, i.e. $u(t^{(j)}, \M(t^{(k)})) \geq u(t^{(j)}, \M(t^{(j)}))$, %
and the weight of edge $e$ is 
\vspace{-5pt}
\begin{eqnarray*}
w(e)=f(t^{(j)})\cdot f(t^{(k)})\cdot \Big[u(t^{(j)}, \M(t^{(k)})) - u(t^{(j)}, \M(t^{(j)}))\Big]
\vspace{-5pt}
\end{eqnarray*}
It is straightforward to see that $\M$ is BIC iff the total weight of all edges in $G$ is zero.

We show the modified transformation for this setting in Fig.~\ref{transform:single-agent-general}. Firstly, it is trivial that our transformation preserves IR, since neither Step 1 nor Step 2 reduces utility.
Then we show this modified Step 1 will strictly decrease the total weights of the graph $G$ and has no negative effect on social welfare and revenue. 

First, we observe each type in $\C$ achieves utility no worse than before, by truthful reporting.
Then, the weight of each outgoing edge from a type in $\C$ to a type in $\T\backslash \C$ will not increase. 

Second, we claim the total weight of edges from any node (type) $t\in \T\backslash C$ to nodes (types) in $\C$ does not increase. To prove this, we assume $w(t, t^{(j)})\geq 0, \forall t^{(j)}\in \C$, i.e. there is a edge from $t$ to any $t^{(j)}\in\C$ in $G$. This is WLOG, because if there is no edge between $t$ to some $t^{(j)}\in C$, we can just add an edge from $t$ to $t^{(j)}$ with weight zero, and this does not change the total weight of the graph.
We denote the mechanism updated after one use of Step 1 as $\M'$, and denote the weight function $w'$ for the graph $G'$ that is constructed from $\M'$. Let $[\cdot]_+$ be the function $\max(\cdot, 0)$. The total weight from $t$ to $t^{(j)}\in \C$ according to the mechanism $\M'=(x', p')$ is
\begin{eqnarray*}
\vspace{-10pt}
&&\sum_{t^{(j)}\in\C} w'(t, t^{(j)})\\
&=& \sum_{j=1}^lf(t^{(j)}) f(t) \big[u(t, \M'(t^{(j)})) - u(t, \M'(t))\big]_{+}\\
&=& \sum_{j=1}^l f(t^{(j)}) f(t)\left[\frac{(f(t^{(j)}) - f(t^{(k)})) u(t, \M(t^{(j)})) + f(t^{(k)})\cdot u(t, \M(t^{(j+1)}))}{f(t^{(j)})} - u(t, \M(t))\right]_+\\
& & (\text{In the fractional rotation step, } \M'(t) = \M(t), \forall t \in \T\backslash \C) \\
&\leq& \sum_{j=1}^l f(t)\cdot \Big(\big(f(t^{(j)})) - f(t^{(k)})\big) \big[u(t, \M^\eps(t^{(j)})) - u(t, \M^\eps(t))\big]_+\\
&& + \sum_{j=1}^l f(t^{(k)})\big[u(t, \M(t^{(j+1)})) - u(t, \M(t))\big]_+\Big)\\
& & (\text{By rearranging the algebra and the fact that } [x + y]_+ \leq [x]_+ + [y]_+) \\
& = &\sum_{j=1}^l f(t^{(j)})\cdot f(t)\cdot \big[u(t, \M(t^{(j)}))- u(t, \M^\eps(t))\big]_+\\
& & (\text{By the fact that } \{t^{(1)}, \cdots, t^{(l)}\} \text{ forms a cycle and } t^{(l+1)} = t^{(1)})\\
& =& \sum_{t^{(j)}\in\C} w(t, t^{(j)})
\vspace{-10pt}
\end{eqnarray*}

\begin{figure*}[h]
\fcolorbox{black}{gray!50!white}{
\parbox{0.95\textwidth}{
{\bf Modified Step 1 (Fractional rotation step).}
Given a mechanism $\M=(x, p)$, find the shortest cycle $\C$ in $G$ that contains at least one edge with positive weight in $E$. Without loss of generality, we represent $\C = \left\{t^{(1)}, t^{(2)},\cdots, t^{(l)}\right\}$. Then we find the node $t^{(k)}, k\in[l]$, such that $f(t^{(k)}) = \min_{k\in[l]}f(t^{(k)})$. Next, we rotate the allocation and payment rules of types along $\C$ with fraction of $f(t^{(k)})/f(t^{(j)})$ for each type $t^{(j)}, j\in[l]$. %
Now we slightly abuse the notation of subscripts, s.t. $t^{(l+1)} = t^{(1)}$. Specifically, the allocation and payment rules for each $t^{(j)}$,
\begin{align*}
x'(t^{(j)}) &=\frac{\big[f(t^{(j)}) - f(t^{(k)})\big]x(t^{(j)}) + f(t^{(k)}) x(t^{(j+1)})}{f(t^{(j)})},\\
p'(t^{(j)}) &=\frac{\big[f(t^{(j)}) - f(t^{(k)})\big] p(t^{(j)}) + f(t^{(k)}) p(t^{(j+1)})}{f(t^{(j)})}.
\end{align*}

Then we update mechanism $\M$ to adopt allocation and payment rules $x', p'$ to form a new mechanism $\M'$ and reconstruct the graph $G$. If this has the effect
of removing all cycles  that contain at least one positive-weight-edge in $G$, then move to Step 2. Otherwise, we repeat Step 1.

{\bf Modified Step 2 (Payment reducing step). } Exactly the same as Step 2 in Theorem~\ref{thm:single-agent-uniform}.

}}
\caption{$\eps$-BIC to BIC transformation for single agent with general type distribution.}
\label{transform:single-agent-general}
\vspace{-10pt}
\end{figure*}

Thus, we prove our claim that the total weight of edges from any node (type) $t\in \T\backslash C$ to nodes (types) in $\C$ does not increase.

Third, by each use of modified Step 1, we remove one cycle and reduce the weight of edge $(t^{(i)}, t^{(i+1)})$ to zero, thus, we decrease the total weight at least by 
$f(t^{(k)})f(t^{(k+1)})(u(t^{(k)}, \M(t^{(k+1)}))-u(t^{(k)}, \M(t^{(k)}))$.

Finally, after one use of Step 1, the expected revenue achieved by types in $\C$ maintains, because
\begin{align*}
\sum_{j=1}^l f(t^{(j)})\cdot p'(t^{(j)}) &=  \sum_{j} (f(t^{(j)}) - f(t^{(k)}))\cdot p(t^{(j)}) + f(t^{(k)})\cdot p(t^{(j+1)})\\
&= \sum_j f(t^{(j)}) p(t^{(j)}) +  f{t^{(k)}}\sum_j   p(t^{(j)}) -  p(t^{(j+1)})\\
& = \sum_j f(t^{(j)}) p(t^{(j)}) & (\text{Because } t^{(l+1)} = t^{(1)})
\end{align*}

The modified Step 2 is the same as Step 2 in Fig.~\ref{transform:single-agent-uniform}. At each step 2, we decrease the total weight of the graph by at least $\min\{\eps_t, \overline{\eps}_t\}$. We count the revenue loss as follows, in each Step 2, if we decrease the payment of $t$ by $\min\{\eps_t, \overline{\eps}_t\}$, the expected revenue loss is bounded by
\begin{eqnarray*}
\sum_{j} f(t^{(j)}) \min\{\eps_t, \overline{\eps}_t\} \leq \min\{\eps_t, \overline{\eps}_t\}
\end{eqnarray*}
Since the weight of each edge is bounded by $\eps$, to reduce the weight of outgoing edges of $t$ to zero, we may decrease the expected revenue by $\eps$. Therefore, in total, the expected revenue loss is bounded by $m\eps$.
\end{proof}

\subsection{Proof of Theorem~\ref{thm:impossibility-EEIC}}\label{app:impossibility-EEIC}
\begin{proof}
We construct the type distribution and the $\eps$-EEIC mechanism similar to the one in Theorem~\ref{thm:lower-bound-revenue-loss}. We consider a single agent with $m$ types $\T=\{t^{(1)},\cdots,t^{(m)}\}$. The type distribution is $f(t^{(1)}) = \frac{1}{2} - \frac{\eps}{2m}$, $f(t^{(2)}) = \frac{\eps}{2m}$ and $f(t^{(j)})=\frac{1}{2(m-2)}, \forall j\geq 3$. The agent with type $t^{(1)}$ values outcome 1 at $\eps$ and the other outcomes at $0$. For any type $t^{(j)}, j\geq 2$, the agent with type $t^{(j)}$ values outcome $j-1$ at $m+(j-1)\eps$, outcome $j$ at $m+(j-1)\eps$, and the other outcomes at $0$. The mechanism we consider is: (1) if the agent reports type $t^{(1)}$, gives the outcome 1 to the agent and charges $\eps$. (2) if the agent reports $t^{(j)}, j\geq 2$, gives the outcome $j$ to the agent and charges $m+(j-1)\eps$. There is a $m$ regret to the agent for not misreporting type $t^{(1)}$ with true type $t^{(2)}$ and a regret $\eps$ for not reporting $t^{(j)}$ with true type $t^{(j+1)}$, for any $j\geq 2$. It is easy to verify that this mechanism is $\eps$-EEIC (the probability of type $t^{(2)}$ is small) and already maximizes social welfare. Thus, we can only change the payment to reduce the regret of each type. Following the same argument as in Theorem~\ref{thm:lower-bound-revenue-loss}, to reduce all the regret of the types, the revenue loss in total is at least
\begin{eqnarray*}
f(t^{(2)}) m + \sum_{j=3}^m f(t^{(j)})(m+(j-2)\eps) &=& \frac{\eps}{2} + \frac{1}{2(m-2)}\sum_{j=3}^m m+(j-2)\eps\\
&=& \frac{\eps}{2} + \frac{m}{2} + \frac{(m-1)\eps}{4} \geq \frac{m}{2}
\end{eqnarray*}
\end{proof}

\subsection{Proof of Theorem~\ref{thm:multi-agent-BIC}}\label{app:multi-agent-BIC}

The earlier proof approach for single agent case does not immediately extend to the multi-agent setting. However, since our target is a BIC mechanism, we can work with interim rules (see Definition~\ref{def:interim-rule}), and this 
provides an approach to the transformation. The interim rules  reduce the dimension of type space and separate the type of each agent. With this, we can construct a separate type graph for each agent, now based on the interim rules. 

To simplify the presentation, we define the {\em induced mechanism} for each agent $i$ of a mechanism $\M$ as follows.
\begin{definition}[Induced Mechanism]\label{def:interim-rule-induced-mechanism}
	For a mechanism $\M=(x, p)$, an induced mechanism $\widetilde{\M}_i = (X_i, P_i)$ is a pair of interum allocation rule $X_i: \T_i \rightarrow \Delta(\O)$ and interim payment rule $P_i: \T_i \rightarrow \R_{\geq 0}$. Denote the utility function $u_i(t_i, \widetilde{\M}_i(t_i)) = v_i(t_i, X_i(t_i)) - P_i(t_i)$.
\end{definition}

The following lemma shows that given an $\eps$-BIC/$\eps$-EEIC mechanism, then the induced mechanism for each agent is also $\eps$-BIC/$\eps$-EEIC. %
\begin{lemma}\label{lem:EEIC-decomposition}
	For a $\eps$-EEIC/$\eps$-BIC mechanism $\M$, any induced mechanism $\widetilde{\M}_i$ for each agent $i$ is $\eps$-EEIC/$\eps$-BIC.
\end{lemma}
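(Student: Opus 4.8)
\textbf{Proof plan for Lemma~\ref{lem:EEIC-decomposition}.}

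The plan is to unfold the definitions of $\eps$-BIC and $\eps$-EEIC for the induced mechanism $\widetilde{\M}_i$ and reduce each to the corresponding property of the original mechanism $\M$, using only the tower property of expectation and the independence of types across agents. The two cases ($\eps$-BIC and $\eps$-EEIC) need slightly different arguments.

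First, for the $\eps$-BIC case, I would observe that by Definition~\ref{def:interim-rule-induced-mechanism}, for any true type $t_i$ and any report $\hat t_i$,
\begin{eqnarray*}
u_i(t_i, \widetilde{\M}_i(\hat t_i)) = v_i(t_i, X_i(\hat t_i)) - P_i(\hat t_i) = \E_{t_{-i}\sim \F_{-i}}\big[v_i(t_i, x(\hat t_i; t_{-i})) - p_i(\hat t_i; t_{-i})\big] = \E_{t_{-i}\sim \F_{-i}}\big[u_i(t_i, \M(\hat t_i; t_{-i}))\big],
\end{eqnarray*}
where the second equality uses linearity of $v_i$ in the allocation (so that $v_i(t_i, X_i(\hat t_i)) = v_i(t_i, \E_{t_{-i}}[x_i(\hat t_i; t_{-i})]) = \E_{t_{-i}}[v_i(t_i, x_i(\hat t_i; t_{-i}))]$, which is exactly the abused-notation convention introduced right after Definition~\ref{def:interim-rule}). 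So the interim utility in $\widetilde{\M}_i$ equals the interim (expected over $t_{-i}$) utility in $\M$ verbatim, and the $\eps$-BIC inequality for $\widetilde{\M}_i$ is \emph{identical} to the $\eps$-BIC inequality for $\M$ in Definition~\ref{def:eps-BIC}. Hence $\widetilde{\M}_i$ is $\eps$-BIC.

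Second, for the $\eps$-EEIC case, I would start from the definition (Definition~\ref{def:eps-EEIC}) applied to $\widetilde{\M}_i$: we must show $\E_{t_i\sim \F_i}\big[\max_{\hat t_i} u_i(t_i, \widetilde{\M}_i(\hat t_i)) - u_i(t_i, \widetilde{\M}_i(t_i))\big] \le \eps$. Using the identity above, the bracketed quantity equals $\max_{\hat t_i}\E_{t_{-i}}[u_i(t_i, \M(\hat t_i; t_{-i}))] - \E_{t_{-i}}[u_i(t_i, \M(t_i; t_{-i}))]$. The key step is to bound the max-of-expectations by the expectation-of-max: for each fixed $t_i$,
\begin{eqnarray*}
\max_{\hat t_i}\E_{t_{-i}}\big[u_i(t_i, \M(\hat t_i; t_{-i}))\big] \le \E_{t_{-i}}\Big[\max_{\hat t_i} u_i(t_i, \M(\hat t_i; t_{-i}))\Big].
\end{eqnarray*}
Subtracting $\E_{t_{-i}}[u_i(t_i, \M(t_i; t_{-i}))]$ from both sides and then taking $\E_{t_i\sim \F_i}$, the right-hand side becomes $\E_{t_i\sim\F_i}\E_{t_{-i}\sim\F_{-i}}\big[\max_{\hat t_i} u_i(t_i, \M(\hat t_i; t_{-i})) - u_i(t_i, \M(t_i; t_{-i}))\big]$, which by the tower property (and independence, so $\F = \F_i\times\F_{-i}$) is exactly $\E_{t\sim\F}\big[\max_{\hat t_i} u_i(t_i, \M(\hat t_i; t_{-i})) - u_i(t_i, \M(t))\big] \le \eps$, the $\eps$-EEIC condition on $\M$. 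This also incidentally proves the claim stated in the main text that any $\eps$-EEIC mechanism is $\eps$-EIIC (the EIIC inequality is precisely the first displayed inequality of this paragraph integrated against $\F_i$, after rearranging). The main (and only) subtlety is the direction of the Jensen-type $\max$/$\E$ swap — one has to be careful that it goes the right way, namely that committing to a single report $\hat t_i$ before seeing $t_{-i}$ can only do worse than choosing the best report ex post; everything else is bookkeeping with linearity of expectation and the independence of $\F$ into $\F_i\times\F_{-i}$.
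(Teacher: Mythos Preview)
Your proposal is correct and follows essentially the same approach as the paper: the $\eps$-BIC case is immediate from the definitions (interim utility in $\widetilde{\M}_i$ coincides with the interim utility in $\M$), and the $\eps$-EEIC case uses the swap $\max_{\hat t_i}\E_{t_{-i}}[\cdot]\le \E_{t_{-i}}[\max_{\hat t_i}\cdot]$ (the paper phrases this as Jensen's inequality for the convex $\max$ function) followed by independence to recombine $\F_i\times\F_{-i}$ into $\F$. Your write-up is somewhat more explicit about the linearity/abused-notation step and about the EIIC corollary, but the argument is the same.
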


\begin{proof}
	By $\eps$-BIC definition, each induced mechanism $\widetilde{\M}_i$ must be $\eps$-BIC, if the original mechanism $\M$ is $\eps$-BIC. Now, we turn to consider $\eps$-EEIC mechanism $\M$, for any induced mechanism $\widetilde{\M}_i$
	\begin{eqnarray*}
		&&\E_{t_i\sim \F_i}\left[\max_{t'_i\in\T_i} u_i(t_i,\widetilde{\M}_i(t'_i)) - u_i(t_i, \widetilde{\M}_i(t_i))\right]\\
		&=& \E_{t_i\sim \F_i}\left[\max_{t'_i\in\T_i} \E_{t_{-i}\sim \F_{-i}} \left[u_i(t_i, \M(t'_i; t_{-i})) - u_i(t_i, \M(t_i; t_{-i}))\right]\right]\\
		&\leq& \E_{t_i\sim \F_i}\left[\E_{t_{-i}\sim \F_{-i}} \left[\max_{t'_i\in\T_i} u_i(t_i, \M(t'_i; t_{-i})) - u_i(t_i, \M(t_i; t_{-i}))\right]\right] \\
		&& (\text{By Jenson's inequality and convexity of $\max$ function})\\
		&=& \E_{t\sim \F}\left[\max_{t'_i\in\T_i} u_i(t_i, \M(t'_i; t_{-i})) - u_i(t_i, \M(t_i; t_{-i}))\right]\\
		&& (\text{By independence of agents' types})\\
		&\leq& \eps.
	\end{eqnarray*}
\end{proof}

Given Lemma~\ref{lem:EEIC-decomposition}, we can construct a single type graph for each agent based on the induced mechanism and apply the same technique for each graph as the one in Theorem~\ref{thm:single-agent-general}.
The challenge will be to also handle feasibility of the resulting mechanism. We summarize these approaches in the following proof for Theorem~\ref{thm:multi-agent-BIC}.

\begin{proof}[Proof of Theorem~\ref{thm:multi-agent-BIC}]
Here, we focus on the $\eps$-BIC setting. The proof for $\eps$-EEIC with independent uniform type distribution is analogous.

We construct a graph $G_i=(\T_i, E_i)$ for each agent $i\in [n]$, such that there is a directed edge from $t_i^{(j)}$ to $t_{i}^{(k)}$ if and only if $u(t_i^{(j)}, \widetilde{\M^\eps}_i(t_{i}^{(k)})) \geq u(t_i^{(j)}, \widetilde{\M^\eps}_i(t_{i}^{(j)}))$ and the weight is 
\begin{eqnarray*}
w_i((t_i^{(j)}, t_i^{(k)})) = f_i(t^{j})\cdot f_i(t_i^{(k)}) \cdot \left(u(t_i^{(j)}, \widetilde{\M}_i(t_{i}^{(k)})) - u(t_i^{(j)}, \widetilde{\M}_i(t_{i}^{(j)})\right)
\end{eqnarray*}
Based on Lemma~\ref{lem:EEIC-decomposition}, each graph is constructed by an $\eps$-BIC induced mechanism $\widetilde{\M^\eps}_i$, we can apply the same constructive proof in Theorem~\ref{thm:single-agent-general} to reduce the total weight of each graph $G_i$ to be 0. An astute reader may have already realized that changing type graph $G_i$ may affect other graphs, since we probably change the distribution of the reported type of agent $i$. However, in our transformation, both Step 1 and Step 2 don't change the density probability of each type (we only change the interim allocation and payment for each type), therefore when we do transformation for one type graph $G_i$ of agent $i$, it has no effect on the interim rules of the other agents. 

Here, if the total weight of all graphs $G_i$ are all 0, it implies that any induced mechanism $\widetilde{\M^\eps}_i$ is IC. Therefore, we make the mechanism BIC. Similarly, the new mechanism after transformation achieves at least the same social welfare and the revenue loss of each graph $G_i$ is bounded by $m_i\eps$, Hence, the total revenue loss is bounded by $\sum_{i=1}^{n}m_i \eps = \sum_{i=1}^n |\T_i|\eps$.

What is left to show is that using modified steps 1 and 2 on each graph $G_i$ shown in Theorem~\ref{thm:single-agent-general} does not violate the feasibility of the mechanism. We only change the allocation of each type in modified Step 1 (Rotation step). Denote by $X_i$  the interim allocation for agent $i$ before one rotation step, and let $X'_i$ denote the updated interim allocation for agent $i$ after one rotation step. We then claim in the modified Step 1 in Theorem~\ref{thm:single-agent-general},
\begin{eqnarray}\label{eq:allocation-invariant-multi-agent}
\sum_{t_i\in \T_i} f_i(t_i) X_i(t_i) = \sum_{t_i\in \T_i} f_i(t_i) X'_i(t_i).
\end{eqnarray}
	
To prove this claim, WLOG, we consider a $l$ length cycle $\C = \{t_i^{(1)}, t_i^{(2)}, \cdots, t_i^{(l)}\}$ in modified Step 1. Let $k = \argmin_{j\in[l]} f_i(t^{(j)})$. We observe the interim allocation of the types in $\T_i\backslash \C$ don't change in modified Step 1, i.e., $\forall t_i \in \T_i\backslash \C, X_i(t_i) = X'_i(t_i)$. We slightly abuse the notation here, and let $t^{(l+1)} = t^{(1)}$. For the types in cycle $\C$, 
\begin{eqnarray*}
\sum_{j\in [l]} f_i(t^{(j)}) X'_i(t^{(j)}) &=& \sum_{j\in [l]} f(t^{(j)}) \cdot \frac{(f(t^{(j)}) - f(t^{(k)})) X_i(t^{(k)}) + f(t^{(k)})\cdot X_i(t^{(j+1)})}{f(t^{(j)})}\\
&=&  \sum_{j\in [l]} f_i(t^{(j)}) X_i(t^{(j)}),
\end{eqnarray*}
which validates the claim. Therefore, by Border's lemma~\citep{Border91}, the rotation step maintains the feasibility of the allocation.

\noindent{\bf Running time. } Suppose we have oracle access to the interim quantities of the original mechanism, we can build each $G_i$ in  $\mathtt{poly}(|\T_i|)$ time. Then, the running time for each type graph $G_i$ is $\mathtt{poly}(|\T_i|)$ following the same argument for single agent. In total the running time is $\mathtt{ploy}(\sum_i|\T_i|)$.  
\end{proof}

\subsection{Proof of Theorem~\ref{thm:lb-revenue-loss-multiple-agents}}\label{app:lb-revenue-loss-multiple-agents}
\begin{proof}
It is straightforward to construct an example such that the type graph of each agent induced by the interim rules is the same as the type graph constructed by the mechanism shown in Theorem~\ref{thm:lower-bound-revenue-loss}. For instance, agent $i$ values outcomes $\{o^{(1)}_i, \cdots, o^{(m_i)}_i\}$ in the same way as the one constructed in Theorem~\ref{thm:lower-bound-revenue-loss}. We assume the outcome $o^{(j)}_i$ are disjoint, for any $i$ and $j\in [m_i]$. Indeed, this is also $\eps$-DSIC mechanism. Thus, we show for this case, that the revenue loss must be at least $\Omega(\sum_{i}|\T_i|\eps)$, if we want to maintain the social welfare, following the same argument in Theorem~\ref{thm:lower-bound-revenue-loss}.
\end{proof}

\subsection{Proof of Theorem~\ref{thm:failure-interdependent-type}}\label{app:failure-interdependent-type}
\begin{proof}
	Consider a setting with two items $A$ and $B$ and two unit-demand agents $1$ and $2$. %
	The two agents share the same preference order on items. Moreover, agent 1 is informed about which is better, while agent 2 has no information. Agent 1 values the better item at $1+\eps$ and the other item
	at 1. Agent 2 values the better item at $2$ and the other item at 0.
	
	There exists an $\eps$-IC mechanism: ask agent 1 which item is better, and give this item to agent 2 for a price of 2 and give agent 1 the other item for a price of 1.  The total welfare and revenue is 3 if agent 1 reports truthfully. Bidder 1 can get $\eps$ more utility by misreporting, in which case it will get the better item for the same price. From this, we can confirm that this is an $\eps$-IC mechanism.
	
	For any IC mechanism, by weak monotonicity, we have $v_A(x(A)) - v_A(x(B)) \geq v_B(x(A)) - v_B(x(B))$, where $v_A$ be the type that the better item is $A$, and similarly for $v_B$. $x(A)$ is the allocation if agent 1 reports $A$ the better item and similarly for $x(B)$.
	This means that when agent 1 reporting $A$ rather than $B$, either agent 1 is assigned item $A$ with weakly higher probability, or agent 1 is assigned item $B$ with weakly less probability.  We only consider the former case, and the latter one holds analogously.
	In the former case, we have either:
	
	(1) agent 1 is getting at least half of $A$ when reporting $A$, and the total revenue and social welfare are each at most $0.5\times 2 + 0.5\times(1+\epsilon) + 1 = 2.5+\epsilon/2$, or
	
	(2) agent 1 is getting at most half of $A$ when reporting $B$, and the total revenue and social welfare are each at most $2 + 0.5 = 2.5$.
	
	Either way, we will definitely lose at least $0.5 - \eps/2$ for revenue and social welfare when making the $\eps$-IC mechanism above BIC.
\end{proof}

\subsection{Proof of Theorem~\ref{thm:failure-dsic-target}}\label{app:failure-dsic-target}
\begin{proof}
The construction of this $\eps$-BIC mechanism is strictly generalized by the mechanism in \citep{Yao17}. Consider a 2-agent, 2-item auction, each agent $i$ values item $j$, $t_{ij}$. $t_{ij}$ is i.i.d sampled from a uniform distribution over set $\{1, 2\}$, i.e. $\PP(t_{ij} = 1) = \PP(t_{ij}=2)  = 0.5$. The $\eps$-BIC mechanism is shown as below,

\smallskip
\fcolorbox{black}{gray!50!white}{
\centering
\parbox{0.9\textwidth}{
If $t_2=(1,1)$, give both items to agent 1 for a price of $3$. 

If $t_2=(1,2)$ and $t_1=(1,2)$, give both items randomly to agent 1 or 2 for a price of $1.5$.

If $t_1=(2,1)$ and $t_2=(1,2)$, give item 1 to agent 1 and give item 2 to agent 2, with a price of $2$ for each.

If $t_2=(1,2)$ and $t_1=(2,2)$, give both items to agent 1 for a price $3.75+\eps$.

If $t_1 = t_2=(2,2)$, give both items randomly to agent 1 or agent 2 for a price $2$.

For other cases, we get the mechanism by the symmetries of items and agents.
}}
\smallskip

It is straightforward to verify that this is an $\eps$-BIC mechanism and the expected revenue is $3.1875 + \eps/16$. However, \citet{Yao17} characterizes that optimal DSIC mechanism achieves expected $3.125$. This conclude the proof.
\end{proof}

\section{Omitted Details of Applications}\label{app:omitted-application}
In this section, we give a brief introduction to LP-based AMD and RegretNet AMD. 
\subsection{LP-based Approach}
The LP-based approach considered in this paper is initiated by~\citep{ConitzerS02}. We consider $n$ agents with type distribution $\F$ defined on $\T$. For each type profile $t\in \T$ and each outcome $o_k\in O$, we define $x^k(t)$ as the probability of choosing $o_k$ when the reported types are $t$ and $p_i(t)$ as the expected payment of agent $i$ when the reported types are $t$. $x^k(t)$ and $p_i(t)$ are both decision variables.

Then we can formulate the mechanism design problem as the following linear programming,
\begin{eqnarray*}
&&\max_{x, p} (1-\lambda)\E_{t\sim \F} \left[\sum_i p_i(t)\right] + \lambda \E_{t\sim \F}\left[\sum_{k:o_k \in O} x^k(t) \sum_i v_i(t_i, o_k)\right]\\
&s.t.& \E_{t_{-i}} \left[\sum_{k:o_k \in O} x^k(t_i, t_{-i}) v_i(t_i, o_k) - p_i(t_i, t_{-i})\right] \geq \E_{t_{-i}} \left[\sum_{k:o_k \in O} x^k(t'_i, t_{-i}) v_i(t_i, o_k) - p_i(t'_i, t_{-i})\right], \forall i, t_i, t'_i\\
&& \E_{t_{-i}} \left[\sum_{k:o_k \in O} x^k(t) v_i(t_i, o_k) - p_i(t)\right]\geq 0, \forall i, t
\end{eqnarray*}
where the first constraint is for BIC and the second is for interim-IR. In this case, the type space $\T$ is discrete, thus the expectation can be explicitly represented as the linear function with decision variables.

\subsection{RegretNet Approach}
RegretNet~\citep{Duetting19} is a generic data-driven, deep learning framework for multi-dimensional mechanism design. We only briefly introduce the RegretNet framework here and refer the readers to~\citep{Duetting19} for more details.

RegretNet uses a deep neural network parameterized by $w\in\R^d$ to model the mechanism $\M$, as well as the valuation (through allocation function $x^w: \T\rightarrow \Delta(O)$) and payment functions: $v_i^w: \T_i\times \Delta(O) \rightarrow \R_{\geq 0}$ and $p_i^w: \T \rightarrow \R_{\geq 0}$. Denote utility function as,
$$u_i^w(t_i, \hat{t}) = v_i(t_i, x^w(\hat{t})) - p_i^w(\hat{t}).$$ 
RegretNet is trained on a training data set $\mathcal{S}$ of $S$ type profiles i.i.d\ sampled from $\F$ to maximize the empirical revenue subject to the empirical regret being zero for all agents:
\begin{eqnarray*}
&&\max_{w\in \R^d} \frac{1-\lambda}{S}\sum_{t\in \mathcal{S}} \sum_{i=1}^n p_i^w(t) + \frac{\lambda}{S}\sum_{t\in \mathcal{S}} \sum_{i=1}^n v_i^w(t_i, x(t))\\
&s.t.& \frac{1}{S}\sum_{t\in \mathcal{S}} \left[\max_{t'_i\in\T_i} u_i^w(t_i, (t'_i, t_{-i})) - u_i^w(t_i, t) \right] = 0, \forall i
\end{eqnarray*}
The objective is the empirical version of learning target in~\ref{eq:learning-target-amd}.
The constraint is for EEIC requirement and IR is hard coded in RegretNet to be guaranteed. Let $\mathcal{H}$ be the functional class modeled by RegretNet through parameters $w$. In this paper, we assume there exists an PAC learning algorithm that can produce a RegretNet to model an $\eps$-EEIC mechanism $\M\in \mathcal{H}$ defined on $\F$, such that
\begin{eqnarray*}
\mu_\lambda(\M', \F) \geq \sup_{\hat{\M}\in \mathcal{H}}\mu_\lambda(\hat{\M}, \F)  - (1-\lambda)\sum_{i=1}^n \vert \T_i\vert \eps - \eps,
\end{eqnarray*}
holds with probability at least $1-\delta$, by observing $S=S(\eps, \delta)$ i.i.d\ samples from $\F$.

\end{document}